\colorlet{bscolor}{blue}
\colorlet{vrcolor}{red}
\newcommand{\Omit}[1]{}
\newcommand{\scp}{\textsc{Star Coloring}}
\newtcolorbox{mybox}[2][]{colbacktitle=white,colback=white,coltitle=black,title={#2},fonttitle=\bfseries,#1, left = 1mm, right = 2mm, breakable}
\newtheorem{reduction rule}{Reduction Rule}
\newtheorem{claim claim}{Claim}
\title{On Structural Parameterizations of Star Coloring}
\titlerunning{}
\author{Sriram Bhyravarapu\inst{1} \and
I. Vinod Reddy\inst{2} 
}
\authorrunning{S. Bhyravarapu and I. Vinod Reddy}
\institute{The Institute of Mathematical Sciences, HBNI, Chennai, India \and
Department of Electrical Engineering and Computer Science,  IIT Bhilai, India
\\
\email{sriramb@imsc.res.in, vinod@iitbhilai.ac.in}}
\begin{document}

\maketitle

\begin{abstract}
A \emph{star coloring} of a graph $G$ is a proper vertex coloring such that
every path on four vertices uses at least three distinct colors. 
The minimum number of colors required for such a star coloring of $G$ is called star chromatic number, denoted by $\chi_s(G)$. 
Given a graph $G$ and a positive integer $k$, the \textsc{Star Coloring Problem} asks whether $G$ has a star coloring using at most $k$ colors. 
This problem is {\sf NP}-complete even on restricted graph classes such as bipartite graphs. 

In this paper, we initiate a study of \scp{} from the parameterized complexity perspective. 
We show that \scp{} is fixed-parameter tractable when parameterized by (a) neighborhood diversity, (b) twin-cover,  
and (c) the combined parameters clique-width and the number of colors.

\end{abstract}
\section{Introduction}
A coloring $f:V(G) \rightarrow \{1,2,\ldots,k\}$ of a graph $G=(V,E)$ is a \emph{star coloring} if (i) $f(u)\neq f(v)$ for every edge $uv\in E(G)$, and (ii) every path on four vertices uses at least three distinct colors. 
The \emph{star chromatic number} of $G$, denoted by $\chi_s{(G)}$, is the smallest integer $k$ such that $G$ is star colorable using $k$ colors. Given a graph $G$ and a positive integer $k$, the \scp{} problem asks whether $G$ has a star coloring using at most $k$ colors.
The name star coloring is due to the fact that the subgraph induced by any two color classes (subset of vertices assigned the same color) is a disjoint union of stars.

\scp{}~\cite{gebremedhin2009efficient} is used in the computation of the Hessian matrix. 
A Hessian matrix is a square matrix of second order partial derivatives of a scalar-valued function.  Hessian matrices are used in large-scale optimization problems, parametric sensitivity analysis~\cite{buskens2001sensitivity}, image processing, computer vision~\cite{lorenz1997multi}, and control of dynamical systems in real time~\cite{buskens2001sensitivity}. Typically, Hessian matrices that arise in a large-scale application are sparse. The computation of a sparse Hessian matrix using the automatic differentiation technique requires a seed matrix. Coleman and Moré~\cite{coleman1984estimation} showed that the computation of a seed matrix  can be formulated using a star coloring of the adjacency graph of a Hessian matrix.


\scp{} was first introduced by Gr{\"u}nbaum in~\cite{grunbaum1973acyclic}. The computational complexity of the problem is studied on several graph classes. The problem is polynomial time solvable on cographs~\cite{lyons2011acyclic}  and line graphs of trees~\cite{omoomi2018polynomial}. 
It is {\sf NP}-complete to decide if there exists a star coloring of bipartite graphs~\cite{coleman1983estimation} using at most $k$ colors, for any $k\geq 3$.
It has also been shown that \scp{}  is {\sf NP}-complete on planar bipartite graphs~\cite{albertson2004coloring} and line graphs of subcubic graphs~\cite{lei2018star} when $k=3$.  Recently, Shalu and Cyriac~\cite{shalu2022complexity} showed that $k$-\textsc{Star Coloring} is {\sf NP}-complete 
for graphs of degree at most four, where $k\in \{4, 5\}$. 


To the best of our knowledge, the problem 
has not been studied in the framework of parameterized complexity.  
In this paper, we initiate the study of \scp{} from the viewpoint of parameterized complexity. In parameterized complexity, the running time of an algorithm is measured as a function of input and a secondary measure called a parameter. A parameterized problem is said to be fixed-parameter tractable (FPT) with respect to a parameter $k$, if the problem can be solved in $f(k) n^{O(1)}$ time, where $f$ is a computable function independent of the input size $n$ and $k$ is a parameter associated with the input instance.  For more details on parameterized complexity, we refer the reader to the texts~\cite{cygan2015parameterized}. 
As \scp{} is {\sf NP}-complete even when $k=3$, 
the problem is para-{\sf NP} complete when parameterized by the number colors $k$. This motivates us to study the problem with respect to structural graph parameters, which measure the structural properties of the input graph. 
The parameter tree-width~\cite{robertson1983graph} introduced by Robertson and Seymour is one of the most investigated structural graph parameters for graph problems.

The \scp{} problem is expressible in monadic second order logic (MSO)~\cite{harshita2017fo}. 
Using the meta theorem of Courcelle~\cite{courcelle1992monadic}, 
the problem is FPT when parameterized by the tree-width  of the input graph. Clique-width~\cite{courcellecw} is another graph parameter which is a generalization of tree-width. 
If a graph has bounded tree-width, then it has  bounded clique-width, however, the converse may not always be true
(e.g., complete graphs). Courcelle's meta theorem can also be extended to graphs of bounded clique-width. It was shown in~\cite{courcelle2000linear} that all problems expressible in MSO logic that does not use edge set quantifications (called as $MS_1$-logic) are FPT when parameterized by the clique-width. However, the \scp{} problem cannot be expressed in $MS_1$ logic~\cite{harshita2017fo,fomin2010intractability}. 
Motivated by this, we study the parameterized complexity of the problem with respect to the combined parameters clique-width and the number of colors and show that \scp{} is FPT.




Next, we consider the parameters neighborhood diversity~\cite{lampis2012algorithmic}  and twin-cover~\cite{ganian2015improving}. 
These parameters are weaker than clique-width in the sense that graphs of bounded neighborhood diversity (resp. twin-cover) have bounded clique-width, however, the converse may not always be true. Moreover, these two parameters are not comparable with the parameter tree-width and they generalize the parameter vertex cover~\cite{ganian2015improving} (see Fig~\ref{fig:my_label-par}). We show that \scp{} is FPT with respect to neighborhood diversity or twin-cover.

\begin{figure}
    \centering
    \includegraphics [trim=2.7cm 17.7cm 1cm 3.5cm, clip=true, scale=0.7]{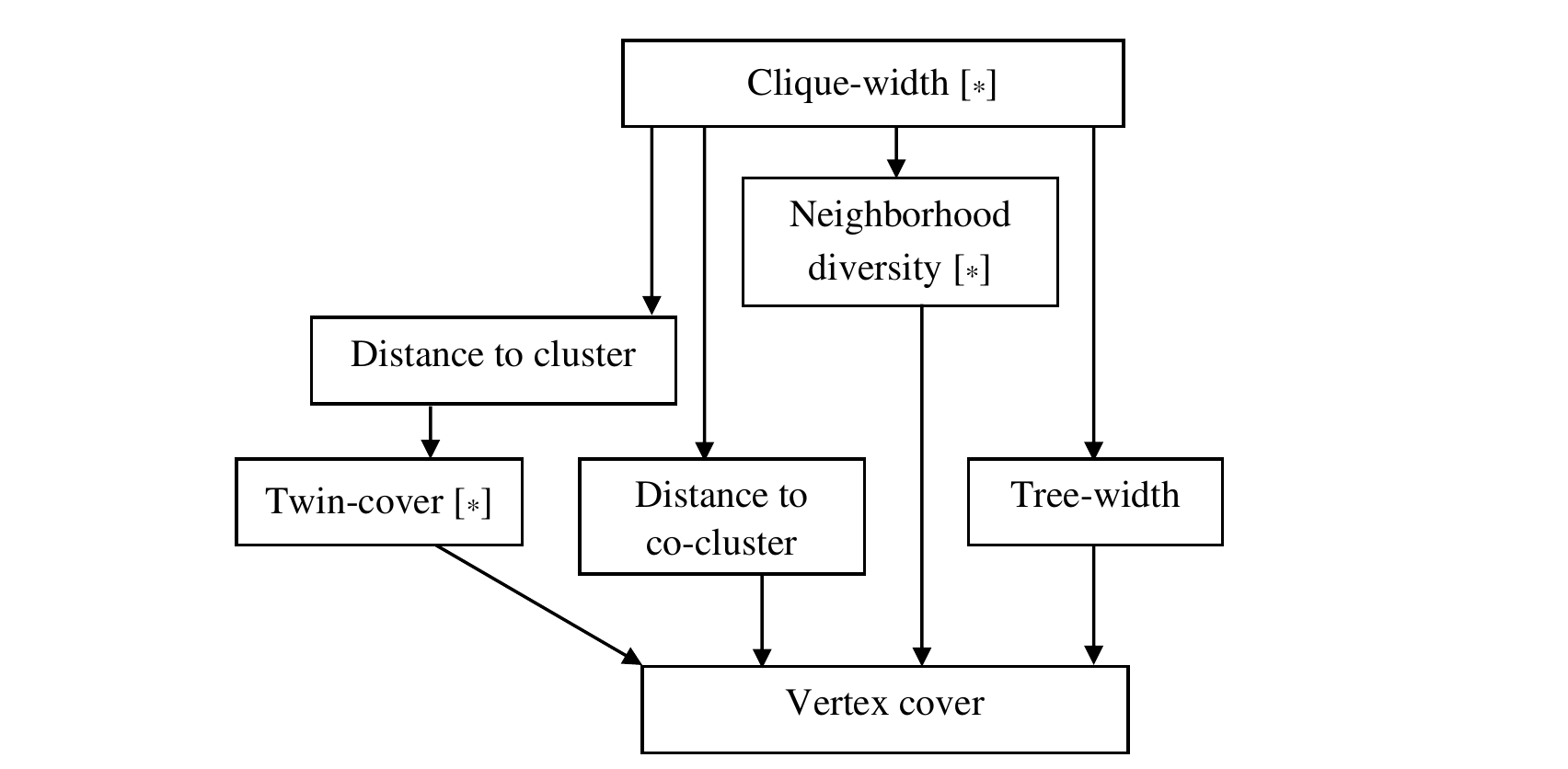}
    \caption{Hasse diagram of some  structural graph parameters. An edge from a parameter $k_1$ to a parameter $k_2$  means that there is a function $f$  such that for all graphs $G$, we have $k_1(G) \leq f(k_2(G))$. The parameters considered in this paper are indicated by $\ast$. }
    \label{fig:my_label-par}
\end{figure}
 
\section{Preliminaries}
For $k \in \mathbb{N}$, we use $[k]$ to denote the set $\{1,2,\ldots,k\}$. If $f: A \rightarrow B$ is a function and $C \subseteq A$, $f|_C$ denotes the restriction of $f$ to $C$, that is $f|_C: C \rightarrow B$ such that for all $x \in C$, $f|_C(x)=f(x)$. 
All graphs we consider in this paper are undirected, connected, finite and simple. For a  graph $G=(V,E)$, we denote the vertex set and edge set of $G$ by $V(G)$ and $E(G)$ respectively. We use $n$ to denote the number of vertices and $m$ to denote the number of edges of the graph.  
For simplicity, an edge between vertices $x$ and $y$ is denoted as $xy$.  
For a  subset $X \subseteq V(G)$, the graph $G[X]$ denotes the subgraph of $G$ induced by vertices of $X$.  If $f:V(G) \rightarrow[k]$ is a coloring of $G$ using $k$ colors, then we use $f^{-1}(i)$ to denote the subset of vertices of $G$ which are assigned the color $i$. For a subset $U \subseteq V(G)$, we use $f(U)$ to denote the set of colors used to color the vertices of $U$, i.e., $f(U)=\bigcup \limits_{u \in U} f(u)$.

For a vertex set $X \subseteq V(G)$, we denote $G - X$, the graph obtained from $G$ by deleting all vertices of $X$ and their incident edges. 
The open neighborhood of a vertex $v$, denoted $N(v)$, is the set of vertices adjacent to $v$ and the set $N[v]=N(v) \cup \{v\}$ denotes  the closed neighborhood of $v$. 
The neighbourhood of a vertex set $S \subseteq V(G)$ is $N(S)=(\cup_{v \in S} N(v)) \setminus S$. For a fixed coloring of $G$, 
we say a path is \emph{bi-colored} if there exists a proper coloring of the path using two colors. 



\section{Neighborhood Diversity}\label{sec:nd}
In this section, we show that \scp{} is 
FPT when parameterized by neighborhood diversity. 
The key idea is to reduce star coloring on graphs of bounded neighborhood diversity 
to the integer linear programming problem (ILP). The latter is FPT when parameterized by the number of variables. 

\begin{theorem}[\cite{frankilp,kannanilp,lenstrailp}]\label{thm:ndilp}
The $q$-variable \textsc{Integer Linear Programming Feasibility} problem can be solved using $O(q^{2.5q+o(q)}n)$
arithmetic operations and space polynomial in $n$, where $n$ is the number of bits of the input.
\end{theorem}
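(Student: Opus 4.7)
The plan is to prove the bound by assembling three classical ingredients from the cited references: Lenstra's branching algorithm for ILP in fixed dimension, Kannan's refinement using generalized basis reduction and a flatness theorem, and Frank-Tardos coefficient reduction. I would first apply the Frank-Tardos preprocessing to the input system $Ax \le b$ with $A \in \mathbb{Z}^{m \times q}$ of total bit length $n$. This yields an equivalent integer system $A'x \le b'$ whose entries have bit length bounded by a polynomial in $q$ alone, while preserving the set of integer solutions of the original system. The preprocessing costs $O(n)$ arithmetic operations and is the step that decouples the combinatorial $q$-dependence from the linear $n$-factor in the final bound; all subsequent arithmetic will act on numbers whose bit length depends only on $q$.

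Next, I would run the Lenstra-Kannan recursive algorithm on the preprocessed polytope $P \subseteq \mathbb{R}^q$. At a general call on a $d$-dimensional rational polytope $P'$, use generalized basis reduction to compute a lattice direction $w \in \mathbb{Z}^d$ that approximately minimizes the width $w(P') = \max_{x \in P'} w^\top x - \min_{x \in P'} w^\top x$. By Khinchin's flatness theorem, there is a constant $f(d) = O(d^{3/2})$ such that every lattice-free convex body in $\mathbb{R}^d$ has width at most $f(d)$ in some lattice direction. If the computed $w(P') > f(d)$, then $P'$ provably contains an integer point and the algorithm constructs one by rounding; otherwise, it branches on the at most $f(d)+1$ possible integer values of $w^\top x$ for $x \in P'$, recursing on each $(d-1)$-dimensional slice $P' \cap \{w^\top x = k\}$.

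The complexity analysis then drives out the $q^{2.5q+o(q)}$ factor. The recursion tree has depth at most $q$, and at the level where $d$ dimensions remain the branching factor is $O(d^{3/2})$, so the total number of leaves is $\prod_{d=1}^{q} O(d^{3/2}) = O((q!)^{3/2}) = q^{1.5q + o(q)}$ by Stirling. Each recursive call performs basis reduction and the width computation in $q^{O(q)}$ arithmetic operations on numbers of bit length $\mathrm{poly}(q)$ (thanks to the preprocessing), so the cumulative arithmetic cost is $q^{1.5q + o(q)} \cdot q^{O(q)} = q^{2.5q + o(q)}$, plus the one-time $O(n)$ pass over the input to perform Frank-Tardos reduction. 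Space remains polynomial in $n$ since every intermediate number has bit length polynomial in $q$ and only $O(q)$ slices are kept on the recursion stack at any time.

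The main obstacle is the flatness theorem itself: establishing that every lattice-free convex body in $\mathbb{R}^d$ has lattice width at most $O(d^{3/2})$. This is a substantive result in the geometry of numbers, proved via Minkowski's theorems applied to the polar body together with John-type ellipsoidal approximations of the convex body. I would appeal to the version in \cite{kannanilp}; the weaker bound originally used in \cite{lenstrailp} yields the same qualitative $q^{O(q)}$ dependence but a worse constant in the exponent. Taking the flatness theorem and the Frank-Tardos preprocessing guarantee as black boxes from \cite{kannanilp,lenstrailp,frankilp}, the rest is the recursion-tree accounting sketched above.
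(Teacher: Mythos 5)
This theorem is not proved in the paper at all: it is imported verbatim as a black box from the cited works of Lenstra, Kannan, and Frank--Tardos, and the paper only \emph{uses} it (in Theorems~\ref{thm:nd} and~\ref{thm:tw}). So there is no in-paper argument to compare against; what you have written is a reconstruction of the literature proof. Your architecture is the right one, and it matches how the bound is standardly assembled: Frank--Tardos coefficient reduction to make the dependence on the input length $n$ linear in the operation count and the space polynomial, followed by the Lenstra--Kannan recursion driven by generalized basis reduction and a flatness theorem, with the recursion-tree product supplying the $q^{\Theta(q)}$ factor.

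The genuine gap is in the final quantitative step. You bound the number of leaves by $\prod_{d=1}^{q} O(d^{3/2}) = q^{1.5q+o(q)}$ and then multiply by a per-call cost of ``$q^{O(q)}$ arithmetic operations.'' But $q^{1.5q+o(q)} \cdot q^{O(q)}$ only yields $q^{O(q)}$ with an unspecified constant in the exponent; it does not give $q^{2.5q+o(q)}$ unless the per-call cost is pinned down to $q^{q+o(q)}$, which you neither state nor justify. The constant $2.5$ is precisely the content of Kannan's refined analysis (including the particular flatness constant his algorithm certifies algorithmically, which is not the $O(d^{3/2})$ bound you attribute to Khinchine --- Khinchine's theorem as such gives a far weaker constant, and the improved polynomial flatness bounds are existential results whose algorithmic counterparts are exactly what Kannan's generalized basis reduction supplies). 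As written, your accounting establishes the qualitative statement ``FPT in $q$ with linear dependence on $n$'' but not the exponent claimed in the theorem; to close this you would have to either carry out Kannan's per-node cost analysis explicitly or, as the paper does, simply cite the result.
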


We now define the parameter neighborhood diversity and state some of its properties. 
\begin{definition}[Neighborhood Diversity~\cite{lampis2012algorithmic}]\label{def:nd}
Let $G=(V,E)$ be a graph. 
Two vertices $u,v\in V(G)$ are said to have the \emph{same type} if and only if $N(u)\setminus \{v\}=N(v)\setminus \{u\}$. A graph $G$ has neighborhood diversity at most $t$ 
if there exists a  partition of $V(G)$ into at most $t$ sets  
$V_1, V_2, \dots, V_t$ 
such that all vertices in each set have same type. 
\end{definition}
Observe that each $V_i$   either forms a  clique or an independent set in $G$.
Also, for any two distinct types
$V_i$ and $ V_j$, 
either each vertex in $V_i$ is adjacent to each vertex in $V_j$,
or no vertex in $V_i$ is adjacent to any vertex in $V_j$. 
We call a set $V_i$ as a \emph{clique type} (resp, independent type) if 
$G[V_i]$ is a clique (resp, independent set). It is known that a smallest sized partition of $V(G)$ into clique types and independent types can be found in polynomial time \cite{lampis2012algorithmic}. Hence, we assume that the types $V_1, V_2, \dots, V_t$ of the graph $G$ are given as input. 

We now present the main result of the section. 
\begin{theorem}\label{thm:nd}
\scp{} can be solved in $O(q^{2.5q+o(q)}  n)$ time, where $q=2^t$ and $t$ is the neighborhood diversity of the graph.  
\end{theorem}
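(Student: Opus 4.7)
The plan is to reduce \scp{} to an instance of \textsc{Integer Linear Programming Feasibility} with $q = 2^t$ variables and invoke Theorem~\ref{thm:ndilp}. First, compute the type partition $V_1, \ldots, V_t$ of $G$ in polynomial time. The key observation is that vertices inside each $V_i$ are interchangeable under the automorphism group of $G$, so a star coloring $f$ is determined, up to these symmetries, by the multiset of its colors' \emph{signatures} $S(c) = \{i : f^{-1}(c) \cap V_i \neq \emptyset\}$. Accordingly, for each non-empty $S \subseteq [t]$ introduce a non-negative integer variable $x_S$ counting the number of colors with signature exactly $S$; this yields the promised $2^t$ variables.

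The first block of constraints expresses that the tuple $(x_S)_S$ arises from a proper coloring with at most $k$ colors. Specifically, $\sum_S x_S \leq k$; for every $S$ containing distinct indices $i \neq j$ with $V_i$ and $V_j$ fully joined in $G$, force $x_S = 0$, since a color class must be independent; for every clique type $V_i$, the vertex-count equality $\sum_{S \ni i} x_S = |V_i|$ must hold, because each color contributes exactly one vertex of $V_i$; for every independent type $V_i$, the inequality $\sum_{S \ni i} x_S \leq |V_i|$ must hold, with the slack $|V_i| - \sum_{S \ni i} x_S$ counting how many extra vertices of $V_i$ reuse a color already present in $V_i$.

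The second block encodes absence of a bichromatic $P_4$ in $G$. A case analysis on how the four path vertices $v_1 v_2 v_3 v_4$ distribute among the types --- using that any two distinct types are either fully joined or non-adjacent, and that two vertices of the same type are adjacent iff that type is a clique --- reduces the forbidden configurations to a finite family indexed by tuples from $[t]^4$ with a prescribed adjacency pattern. Each configuration asserts the simultaneous presence of certain signatures, sometimes together with ``some color repeats inside $V_i$'' for a specific independent type $V_i$. I would encode each such indicator $\mu_i$ linearly in terms of the slack of $V_i$, and write each forbidden configuration as a linear forbidden-pair inequality such as $\mu_i + \mu_j \leq 1$ for adjacent independent types $V_i, V_j$, or $\alpha_{\{i_1,i_3\}} + \alpha_{\{i_2,i_4\}} \leq 1$ where $\alpha_T$ is a $\{0,1\}$ indicator of ``there exists a color whose signature contains $T$''.

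The main obstacle is this case analysis, and in particular proving that the linearised system is both necessary and sufficient for the existence of a star coloring with the prescribed counts. Necessity is by direct inspection of each $P_4$ pattern. Sufficiency is subtler: given a feasible $(x_S)$, I would construct an explicit coloring by assigning vertices in each type to colors greedily, respecting the counts and the surplus, and then argue that type-symmetry prevents any bichromatic $P_4$ in $G$ beyond those already ruled out combinatorially. Once both directions are settled, Theorem~\ref{thm:ndilp} yields the claimed running time $O(q^{2.5q+o(q)} n)$ with $q = 2^t$, completing the reduction.
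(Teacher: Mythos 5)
Your proposal follows essentially the same route as the paper: one ILP variable per subset $S \subseteq [t]$ counting the colors whose signature is exactly $S$, properness/counting constraints per type (sum at most $k$, zero for signatures containing two fully-joined types, equality for clique types, upper bound for independent types), and a finite family of forbidden-bichromatic-$P_4$ constraints indexed by the type-patterns of a path on four vertices, with both directions verified by a greedy construction and direct inspection, and the whole thing handed to the Lenstra--Kannan ILP algorithm. The only substantive differences are that you phrase the $P_4$ constraints as linear forbidden-pair inequalities with auxiliary indicator variables (arguably a cleaner linearization than the paper's conditional constraints), and that you omit the lower bound $\sum_{S \ni i} x_S \ge 1$ for each independent type $V_i$, which the paper includes and which is needed so that every vertex can actually be assigned a color in the reconstruction step.
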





 Let $G=(V,E)$ be a graph with the types $V_1, V_2, \dots, V_t$. 
 For each $A\subseteq \{1, 2, \dots, t\}$, we denote a \emph{subset type} of $G$ by $T_A=\{V_i\mid i\in A\}$. 
We denote the set of all types adjacent to type $V_i$ by $adj(V_i)$. That is, $V_j\in adj(V_i)$ if every vertex in $V_j$ is adjacent to every vertex of $V_i$. 
Given a graph $G$ and its types, we construct the ILP instance in the following manner. 


\medskip
\noindent \textbf{Construction of the ILP instance:} 
For each $A\subseteq [t]$, 
let $n_A$ be the variable that denotes 
the number of colors assigned to vertices in every type of 
$T_A$ and not used in any of the types from $\{V_1, V_2, \dots, V_t\}\setminus T_A$. 
For example, if $A = \{1, 3, 4\}$ (i.e., $T_A = \{V_1, V_3 , V_4 \}$) and $n_A = 2$, then there are two colors, say
$c_1$ and $c_2$, such that both $c_1$ and $c_2$ 
are assigned to vertices in 
each of the types $V_1$, $V_3$ and $V_4$ and 
not assigned to any
of the vertices in types $\{V_1 , V_2 , \dots , V_t\} \setminus \{V_1, V_3 , V_4 \}$. This is the critical part of the proof where we look at how many colors are used exclusively in each type of $T_A$ rather than what colors are used. 
Since we have a variable $n_A$ for each $A\subseteq [t]$, 
the number of variables is $2^t$. 


We now describe the constraints for ILP with a short description explaining the significance or the information being captured by the constraints. 
\begin{enumerate}\setlength\itemsep{1.2em}


    \item[(C0)] Discard all subset types $T_A$ containing two types $V_i, V_j$ where  $V_j\in adj(V_i)$. 
    

     To ensure that no two adjacent vertices 
    are assigned the same color, we introduce this constraint that 
    only considers $T_A$ in which no two types in $T_A$ are adjacent.


    

    \item[(C1)] The sum of all the variables is at most $k$. That is $\sum\limits_{A\subseteq[t]}n_A\leq k$. 
    

    We introduce this constraint to ensure that  the number of colors used in any coloring is at most $k$.   
    
    \item[(C2)]

    For each clique type $V_i$, $i\in [t]$, 
    the sum of the variables $n_A$ 
    for which $V_i\in T_A$ is equal to the number of vertices in  $V_i$. 
    That is, $\sum\limits_{A:V_i\in T_A}n_A=|V_i|$.  
 
    To ensure that no two vertices in the clique type $V_i$ are assigned the same color, we introduce this constraint. 
    

    \item[(C3)] For each independent type $V_i$, where $i\in [t]$, 
    the sum of the variables $n_A$ for 
    which $V_i\in T_A$ is at least one and 
    at most the minimum of $k$ and the number of vertices in  $V_i$. 
    That is, 
    $1\leq \sum\limits_{A:V_i\in T_A} n_A \leq  \min\{k, |V_i|\}$. 

    To ensure that the number of colors used for coloring an independent type $V_i$ is at least one and at most the minimum of $k$ and $|V_i|$, we introduce this constraint. 
    

    \item[(C4)] For each combination of four distinct types, 
    say $V_{i_1}, V_{i_2}, V_{i_3}$ and  $V_{i_4}$, where $i_1, i_2, i_3, i_4\in [t]$, with $V_{i_1}, V_{i_3}\in adj(V_{i_2})$ and $V_{i_4}\in adj(V_{i_3})$, we have the following constraint:

        If the sum of the variables $n_A$ for which $V_{i_1}, V_{i_3}\in T_A$ 
    is at least one, then 
    sum of variables $n_{B}$ for which $V_{i_2}, V_{i_4}\in T_{B}$ should be equal to zero. 
    That is, 
$$\sum\limits_{\substack{A:V_{i_1}, 
    V_{i_3}\in  T_A \mbox{ where } \\ V_{i_1}, V_{i_3}\in adj(V_{i_2})  
    \mbox{ and }  V_{i_4}\in adj(V_{i_3})}}
    n_A \geq 1 \implies \sum\limits_{B: V_{i_2}, V_{i_4}\in T_{B}} n_{B} =0.$$

  This constraint ensures that if there exists a vertex in $V_{i_1}$ and a vertex in $V_{i_3}$ that are assigned the same color, then the sets of colors used to color the vertices of $V_{i_2}$ and $V_{i_4}$ are disjoint.

    \item[(C5)] For every combination of three distinct types, say  $V_{i_1}, V_{i_2}, V_{i_3}$, where $i_1, i_2, i_3\in [t]$, with 
    $V_{i_1}$ being an independent type and $V_{i_2}, V_{i_3}\in adj(V_{i_1})$, we have the following constraint: 
    
    If the sum of 
    the variables $n_A$ for which $V_{i_1}\in T_A$ is strictly less than the number of vertices in $V_{i_1}$, then the sum of variables $n_{B}$ for which $V_{i_2}, V_{i_3}\in T_{B}$ is equal to zero.

          $$\sum\limits_{\substack{A:V_{i_1}\in T_A, \mbox{ where } V_{i_2}, V_{i_3}\in adj(V_{i_1}) \mbox{ and } \\ V_{i_1} \mbox{ is an independent type}}} n_A < |V_{i_1}| \implies \sum\limits_{B: V_{i_2}, V_{i_3}\in T_B 
    } n_B =0.$$

    This constraint ensures that if there exist two vertices in $V_{i_1}$ that are assigned the same color, then every vertex in $V_{i_2}$ is assigned a color different from every vertex in $V_{i_3}$. 


    

    


    \item[(C6)]  For every combination of two distinct independent types 
    $V_{i_1}, V_{i_2}$, where $i_1, i_2\in [t]$ with $V_{i_1}\in adj(V_{i_2})$, if the sum of 
    the variables $n_A$ for which $V_{i_1}\in T_A$ is less than the number of vertices in $V_{i_1}$, then the sum of variables $n_{B}$ for which $V_{i_2}\in T_{B}$ is equal to the number of vertices in $V_{i_2}$, and vice-versa. The former constraint is illustrated below while the latter constraint can be constructed by swapping $V_{i_1}$ and $V_{i_2}$ in the former constraint. 
    
     $$\sum\limits_{\substack{A:  V_{i_1}\in  T_A \mbox{ where } V_{i_1}\in adj(V_{i_2}) \\ \mbox{ and } V_{i_1}, V_{i_2} \mbox{ are independent types} }} n_A < |V_{i_1}| \implies \sum\limits_{\substack{B: V_{i_2}\in T_B }} n_B = |V_{i_2}|.$$

     This constraint ensures that if there exist two vertices in $V_{i_1}$ that are assigned the same color  then all vertices in 
     $V_{i_2}$ 
     are assigned distinct colors. We can say similar things for the latter constraint. 
  \item[(C7)] For each $A\subseteq [t]$, $n_A\geq 0$. 
     
     The number of colors used exclusively in all the types of $T_A$ is at least 0.

         \end{enumerate}
        
    The construction of the ILP instance is complete. 
    We use Theorem \ref{thm:ndilp} to obtain a feasible assignment for ILP. Using this, we find a star coloring of $G$. 
    We now show that $G$ has a star coloring using at most $k$ colors 
    if and only if there exists a feasible assignment to ILP. 
    
    \begin{lemma}\label{lem:ilp-color}
    If there exists a feasible assignment to ILP then $G$ has a star coloring using at most $k$ colors. 
    \end{lemma}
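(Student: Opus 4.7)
The plan is to use the feasible assignment $\{n_A\}_{A\subseteq[t]}$ to explicitly construct a star coloring of $G$ and then verify both the properness and the no bi-colored $P_4$ property. For each $A\subseteq[t]$ with $n_A\geq 1$, I would introduce a pool $C_A$ of $n_A$ fresh colors; by constraint (C1) the total number of colors is at most $k$, and by (C0) the pool $C_A$ is nonempty only when $T_A$ is an independent family of types. For each type $V_i$, the colors that may appear on $V_i$ are precisely $\bigcup_{A:V_i\in T_A}C_A$. For a clique type, (C2) gives exactly $|V_i|$ available colors, so I assign the vertices of $V_i$ bijectively to these colors. For an independent type, (C3) guarantees at most $|V_i|$ available colors (and at least one), so I distribute the colors to the vertices so that every available color is used on at least one vertex of $V_i$.

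Next I would verify that the resulting map $f$ is a proper coloring: if $uv\in E(G)$ and $f(u)=f(v)$, the common color lies in some $C_A$, so $u,v$ both belong to types in $T_A$. If $u,v$ are in the same type, that type is a clique and they received distinct colors by construction, a contradiction. If they lie in distinct types $V_i,V_j\in T_A$, then edge $uv$ forces $V_j\in adj(V_i)$, but then (C0) would have excluded $T_A$, so $n_A=0$ and $C_A=\emptyset$, again a contradiction.

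The main (and hardest) step is the absence of bi-colored $P_4$'s. Suppose for contradiction that $v_1\text{-}v_2\text{-}v_3\text{-}v_4$ is a $P_4$ with $f(v_1)=f(v_3)$ and $f(v_2)=f(v_4)$, and let $V_{j_\ell}$ be the type of $v_\ell$. I would first rule out ``collapsed'' type patterns: if $V_{j_1}=V_{j_2}$ or $V_{j_2}=V_{j_3}$ or $V_{j_3}=V_{j_4}$, then one easily checks (using $V_{j_3}\in adj(V_{j_1})$ or $V_{j_4}\in adj(V_{j_2})$ forced by the path) that either $v_1v_3$ or $v_2v_4$ is an edge, contradicting properness together with $f(v_1)=f(v_3)$ or $f(v_2)=f(v_4)$. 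Similarly, if $V_{j_1}=V_{j_4}$ with all of $V_{j_1},V_{j_2},V_{j_3}$ distinct, the edge $v_3v_4$ forces $V_{j_1}\in adj(V_{j_3})$, and then (C0) prevents any common color in $V_{j_1}$ and $V_{j_3}$, again contradicting $f(v_1)=f(v_3)$.

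It then remains to handle the four genuine cases. If $V_{j_1},V_{j_2},V_{j_3},V_{j_4}$ are all distinct, then $f(v_1)=f(v_3)$ gives $\sum_{A:V_{j_1},V_{j_3}\in T_A}n_A\geq 1$, so (C4) forces $\sum_{B:V_{j_2},V_{j_4}\in T_B}n_B=0$, contradicting $f(v_2)=f(v_4)$. If $V_{j_1}=V_{j_3}$ but $V_{j_2}\neq V_{j_4}$, the common type must be independent, $v_1,v_3$ consume one color twice in $V_{j_1}$, so $\sum_{A:V_{j_1}\in T_A}n_A<|V_{j_1}|$, and (C5) applied with $(i_1,i_2,i_3)=(j_1,j_2,j_4)$ gives the contradiction; the symmetric case $V_{j_2}=V_{j_4}$, $V_{j_1}\neq V_{j_3}$ is handled by (C5) with roles swapped. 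Finally, if both $V_{j_1}=V_{j_3}$ and $V_{j_2}=V_{j_4}$ (both necessarily independent), (C6) applied to the pair $(V_{j_1},V_{j_2})$ produces the contradiction. I expect the bookkeeping in this case analysis, and carefully matching each collapsed pattern to a proper-coloring violation or to one of (C4)--(C6), to be the principal obstacle.
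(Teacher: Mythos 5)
Your proposal is correct and follows essentially the same route as the paper: build the coloring from the color pools $C_A$, get properness from (C0) and (C2), and rule out bi-colored $P_4$'s by a case analysis on the type pattern of the path, matching four/three/two distinct types to (C4)/(C5)/(C6). Your explicit treatment of the ``collapsed'' patterns (consecutive path vertices in the same type, or $V_{j_1}=V_{j_4}$) is in fact more careful than the paper's, which dismisses these configurations with a WLOG.
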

    \begin{proof}
    Using a feasible assignment returned by  the ILP, we construct a star coloring  $f:V(G)\rightarrow [k]$ of $G$.
    Let $A_1, A_2, \ldots A_{2^t}$ be the subsets of $[t]$ in some fixed order. For each $A_i$, we associate the set of colors $c(A_i)=\{ \sum\limits_{j=0}^{i-1} n_{A_j}+1, \sum\limits_{j=0}^{i-1} n_{A_j}+2, \ldots, \sum\limits_{j=0}^{i-1} n_{A_j}+n_{A_i}\}$, where $n_{A_0}=0$.  
    
    

    Now, for each $V_j$, we associate the set of colors $c(V_j)= \cup_{j \in A_i} c(A_i)$.
    If $V_j$ is a clique type, then from constraint (C2), $|c(V_j)|=|V_j|$ for every $j$. Therefore, we color the vertices of $V_j$ with distinct colors from the set $c(V_j)$. 
If $V_j$ is an independent type, then from constraint  (C3), $1 \leq |c(V_j)| \leq \min \{k, |V_j|\}$. In this case, we greedily color the vertices of $V_j$ with colors from the set $c(V_j)$ such that each color is used at least once in $V_j$. This finishes the description of the coloring $f$ of $G$.  

We now argue that $f$ is a star coloring of $G$.
To show that $f$ is a proper coloring, we need to show that every vertex is assigned a color and adjacent vertices do not receive the same color. 
The coloring process described above ensures that every vertex is colored. 
Also, $f$ is a proper coloring because of the constraints (C0) and (C2). The former constraint ensures that subset types $T_A$ considered do not contain a pair of adjacent types in it while the latter constraint ensures that no two vertices in a clique type are assigned the same color. 
Thus $f$ is a proper coloring. 

We now show that there is no bi-colored path of length $3$. Suppose that there exists a path $u_1-u_2-u_3-u_4$ on four vertices such that $f(u_1)=f(u_3)$ and $f(u_2)=f(u_4)$. 

\begin{itemize}\setlength\itemsep{1em}
    \item \textbf{Vertices $u_1,u_2,u_3, u_4$ belong to four distinct types. }
    
    WLOG, let $u_1, u_2, u_3, u_4$ belong to $V_1, V_2, V_3, V_4$ respectively. 
    From the definition of neighborhood diversity, we have $V_1, V_3 \in adj(V_2)$ and $V_4 \in adj(V_3)$. 
    As $f(u_1)=f(u_3)$ and $f(u_2)=f(u_4)$, there exists two sets $A\subseteq [t]$ and $B\subseteq [t]$ such that 
    $V_1, V_3\in  T_A$, $V_2, V_4\in  T_B$, $n_A \geq 1$ and $n_B\geq 1$. This cannot happen because of the constraint (C4). 

    \item \textbf{Vertices $u_1,u_2,u_3, u_4$ belong to three  distinct types.}   
        \vspace{0.2cm}

      WLOG, let $u_1, u_2, u_3, u_4$ belong to $V_1, V_2, V_1, V_3$ respectively. Since $f(u_1)=f(u_3)$, it is the case that $\sum\limits_{A:V_1\in T_A} n_A < |V_1|$ implying $V_1$ is an independent type. 
      Since $V_1$ is an independent type with two vertices assigned the same color and  $f(u_2)=f(u_4)$, 
      there exists $B\subseteq [t]$ such that $V_2, V_3\in  T_B$ and $n_B\geq 1$. This cannot happen because of the constraint (C5). 
      

    
    \item \textbf{Vertices $u_1,u_2,u_3, u_4$ belong to two   distinct types. } 

     WLOG, let $u_1, u_3\in V_1$ and $ u_2, u_4\in V_2$. Similar arguments as in the above case can be applied to show that $V_1$ and $V_2$ are independent types and this case cannot arise due to constraint (C6). 

\end{itemize}
Thus $f$ is a star coloring of $G$ using at most $k$ colors. 
\qed
\end{proof}
    
 \begin{lemma}\label{lem:color-ilp}
 If $G$ has a star coloring using at most $k$ colors then there exists a feasible assignment to ILP. 
 \end{lemma}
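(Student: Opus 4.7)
The plan is to construct a feasible assignment directly from the given star coloring. Let $f: V(G) \to [k]$ be a star coloring of $G$. For each color $c \in [k]$, define its \emph{signature} $S(c) = \{i \in [t] : c \in f(V_i)\} \subseteq [t]$, i.e., the set of indices of types in which the color $c$ appears. For each $A \subseteq [t]$ I would set
$$n_A \; := \; \bigl|\{c \in [k] : S(c) = A\}\bigr|.$$
By construction $n_A$ counts exactly the colors that are used in every type of $T_A$ and in no type outside $T_A$, matching the intended semantics of the variables. The remainder of the proof is to verify that these values satisfy each of (C0)--(C7).

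The easy constraints are handled immediately. For (C0): if $T_A$ contained adjacent types $V_i, V_j$, any color with signature $A$ would color some edge monochromatically, contradicting properness; hence $n_A = 0$ for such $A$. (C1) is the observation that $\sum_A n_A$ is the total number of colors used by $f$, which is at most $k$. (C2) follows because a clique type forces distinct colors on its vertices, so $\sum_{A : V_i \in T_A} n_A = |f(V_i)| = |V_i|$. (C3) follows because an independent type must receive at least one color and at most $\min\{k, |V_i|\}$ colors. Finally (C7) holds by definition.

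The main technical content, and the main obstacle, is verifying the three bi-colored $P_4$ constraints (C4), (C5), (C6). For each of these I would argue by contradiction: assume the antecedent of the implication holds while the conclusion fails, and exhibit a bi-colored $P_4$ in $G$. Concretely, for (C4), pick a color $c$ shared by vertices $u_1 \in V_{i_1}$ and $u_3 \in V_{i_3}$, and a color $c'$ shared by vertices $u_2 \in V_{i_2}$ and $u_4 \in V_{i_4}$; the type-adjacencies $V_{i_1}, V_{i_3} \in adj(V_{i_2})$ and $V_{i_4} \in adj(V_{i_3})$ guarantee that $u_1 u_2 u_3 u_4$ is a $P_4$ in $G$, and it is bi-colored. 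For (C5), if $V_{i_1}$ is independent and its vertices use strictly fewer than $|V_{i_1}|$ colors, then two distinct vertices $u_1, u_1' \in V_{i_1}$ share a color; combining this with a color shared by some $u_2 \in V_{i_2}$ and $u_3 \in V_{i_3}$ produces the bi-colored path $u_2 u_1 u_3 u_1'$, where all four edges exist because $V_{i_2}, V_{i_3} \in adj(V_{i_1})$. For (C6), two monochromatic pairs $u_1, u_1' \in V_{i_1}$ and $u_2, u_2' \in V_{i_2}$ inside independent types with $V_{i_1} \in adj(V_{i_2})$ yield the bi-colored $P_4$ $u_1 u_2 u_1' u_2'$.

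The only subtle point is that in (C5) and (C6) the witnesses $u_1, u_1'$ (and $u_2, u_2'$) lie in the same independent type, so they are non-adjacent, which is exactly what is needed to guarantee the path has four distinct vertices with no chords forced by type structure. Once these three cases are dispatched, the assignment $\{n_A\}_{A \subseteq [t]}$ satisfies every constraint, establishing feasibility and completing the lemma. Together with Lemma~\ref{lem:ilp-color}, this proves Theorem~\ref{thm:nd} via Theorem~\ref{thm:ndilp}.
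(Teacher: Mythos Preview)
Your proof is correct and follows essentially the same approach as the paper: your signature-based definition of $n_A$ is exactly the paper's formula $n_A = \bigl|\bigcap_{V_i \in T_A} f(V_i) \setminus \bigcup_{V_i \notin T_A} f(V_i)\bigr|$ rephrased, and your verification of (C0)--(C7) mirrors the paper's (indeed you supply the details for (C5) and (C6) that the paper leaves to ``similarly''). The only slip is the phrase ``all four edges exist'' in the (C5) case---a $P_4$ has three edges---but this does not affect the argument.
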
  

\begin{proof}
Let $f:V(G)\rightarrow [k]$ be a star coloring of $G$ using $k$ colors. 
For each $A\subseteq [t]$, we set $$n_A=|\bigcap \limits_{V_i \in T_A} f(V_i) - \bigcup \limits_{V_i \notin T_A} f(V_i)|.$$ That is    
$n_A$ is  the number of colors that appear in each of the types in $T_A$ and does not appear in any of the types from $\{V_1,\ldots, V_t \}\setminus T_A$.  
We now show that such an assignment satisfies the constraints (C0)-(C7). 

\begin{enumerate}
    \item Since $f$ is a proper coloring of $G$, no two vertices in two adjacent types are assigned the same color.  Hence the constraint (C0) is satisfied. 
    \item Using the fact that $f$ is a star coloring that uses $k$ colors and from the definition of $n_A$, where each color is counted towards only exactly 
    one variable, we see that the constraint (C1) is satisfied. For each of the remaining variables $n_A$ for which no color is associated with it, we have that $n_A=0$. Hence the constraint (C7) is satisfied. 
    \item When $V_i$ is a clique type, we have that $|f(V_i)|=|V_i|$. 
    The expression $\sum\limits_{A:V_i\in T_A}n_A$ denotes the number of colors used in $V_i$ in the coloring $f$, which equals 
    $|V_i|$. 
     Hence the constraint (C2) is satisfied.

     \item When $V_i$ is an independent type, the number of colors used in $V_i$ is at most the minimum of $k$ and $|V_i|$. In addition, we need at least one color to color the vertices of $V_i$. Hence $1 \leq |f(V_i)| \leq \min\{k, |V_i|\}$. Since $\sum\limits_{A:V_i\in T_A}n_A =|f(V_i)|$, the constraint (C3) is satisfied. 
     
     \item Since $f$ is a star coloring,  there is no bi-colored $P_4$. 
Thus for every combination of four types, say $V_1, V_2, V_3$ and $V_4$, 
if there exists a color assigned to a vertex in $V_1$ and a vertex in $V_3$ with $V_1, V_3\in adj(V_2)$ and $V_4\in adj(V_3)$, then all the vertices in $V_2\cup V_4$ should be assigned distinct colors. That is, there is no $B\subseteq [t]$ for which $V_2, V_4\in T_B$ and $n_B\geq 1$. Hence the constraint  
(C4) is satisfied.

\end{enumerate}
Similarly, we can show that constraints (C5) and (C6) are also satisfied. 
\qed
\end{proof}

The running time of the algorithm depends on 
the time taken to construct an ILP instance and 
obtain a feasible assignment for the ILP using Theorem \ref{thm:ndilp}. 
The former takes polynomial time while the latter takes $O(q^{2.5q+o(q)}n)$ time where $q=2^t$ is the number of variables. 
This completes the proof of Theorem \ref{thm:nd}. 

\section{Twin Cover}
In this section, we show that \scp{} is FPT when parameterized by twin cover. 
Ganian~\cite{ganian2015improving} introduced the notion of  twin-cover which is a generalization of vertex cover. Note that the parameters neighborhood diversity and  twin-cover are not comparable (see Section~3.4 in \cite{ganian2015improving}). We now define the parameter twin-cover and state some of its properties.

\begin{definition}[Twin Cover~\cite{ganian2015improving}]\label{def:tw}
Two vertices $u$ and $v$ of a graph $G$ are said to be  {\it twins}  if $N(u)\setminus \{v\}=N(v)\setminus \{u\}$ and \emph{true twins} if $N[u]=N[v]$.  
A {\it twin-cover} of a graph $G$ is a set $X \subseteq V (G)$ of vertices such that for every edge $uv \in E(G)$ either $u \in X$ or $v \in X$, or
$u$ and $v$ are true twins.
\end{definition}

\begin{remark}
If $X \subseteq V (G)$ is a twin-cover of $G$ then (i) $G - X$ is disjoint union of
cliques, and (ii) for each clique $K$ in $G - X$ and each pair of vertices $u, v$ in $K$,
$N (u) \cap X = N(v) \cap X$.
\end{remark}

\begin{theorem}\label{thm:tw}
\scp{} can be solved in $O(q^{2.5q+o(q)}n)$ time where $q=2^{2^t}$
and $t$ is the size of a twin-cover of the graph. 
\end{theorem}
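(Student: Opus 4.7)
The plan is to extend the ILP-based strategy of Theorem~\ref{thm:nd} to exploit the twin-cover structure. Let $X$ be a twin-cover of $G$ with $|X|=t$. By the Remark above, $G-X$ is a disjoint union of cliques, and all vertices in any such clique share the same neighborhood in $X$. Grouping the cliques of $G-X$ by their common $X$-neighborhood yields at most $2^t$ \emph{clique-groups} $\{\mathcal{C}_S : S \subseteq X\}$, where $\mathcal{C}_S$ consists of all cliques $K$ of $G-X$ with $N(K) \cap X = S$.

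The first step is to branch over all partitions of $X$ into color classes that induce a proper coloring of $G[X]$; the number of such partitions is at most the Bell number $B_t$, a function of $t$ alone. Fixing one such partition $\mathcal{P}$, we set up an ILP whose variables describe, for each \emph{color profile} $(\pi, \mathcal{A})$, the number of colors of that profile used by the target star coloring. Here $\pi \in \mathcal{P} \cup \{\bot\}$ records which class of $\mathcal{P}$ the color is reused on (or $\bot$ if the color is absent from $X$), and $\mathcal{A} \subseteq 2^X$ records the subset of clique-groups that contain the color. The number of profiles, and hence the number of ILP variables, is at most $(t+1)\cdot 2^{2^t}$, which is $q = 2^{2^t}$ up to lower-order factors absorbed into the exponent.

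The constraints parallel (C0)--(C7). Properness constraints (analogues of (C0), (C2), (C3)) forbid any profile $(\pi, \mathcal{A})$ that assigns a color both to some $v \in X$ and to a clique-group $\mathcal{C}_S$ with $v \in S$; they also require that for each $\mathcal{C}_S$ the total multiplicity of profiles with $\mathcal{C}_S \in \mathcal{A}$ is sufficient to supply $|K|$ distinct colors to every clique $K \in \mathcal{C}_S$. A total-count constraint analogous to (C1) bounds the global number of colors by $k$. The star-coloring constraints come from ruling out bi-colored $P_4$s: we perform a case analysis on how the four vertices of a candidate $P_4$ distribute between $X$ and the clique-groups (four, three, two, one, or zero in $X$), and each case yields an implication constraint of the form of (C4)--(C6) that forbids a specific pair of profiles from being simultaneously nonzero.

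The main obstacle is this $P_4$ case analysis: unlike the neighborhood-diversity setting, two distinct cliques of the \emph{same} clique-group $\mathcal{C}_S$ may legitimately share colors, so a bi-colored $P_4$ can arise entirely inside a single $\mathcal{C}_S$ through two vertices $x,y \in S$ that share a color under $\mathcal{P}$. Enumerating all forbidden patterns therefore requires tracking, at the profile level, how color reuse across cliques of a single group interacts with color repetition on $S$ in $\mathcal{P}$; fortunately only profile-level information is needed, so the list of forbidden profile combinations depends only on $t$. Once these constraints are in place, correctness follows by the same two-directional argument as Lemmas~\ref{lem:ilp-color} and \ref{lem:color-ilp}: a feasible ILP assignment is converted into a star coloring by greedily distributing each profile's colors across the relevant cliques and $X$-classes, and conversely a star coloring yields the profile counts directly. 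Applying Theorem~\ref{thm:ndilp} to each of the $B_t$ ILPs with $q = 2^{2^t}$ variables gives the claimed running time $O(q^{2.5q + o(q)} n)$, proving Theorem~\ref{thm:tw}.
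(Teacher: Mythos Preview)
Your route differs from the paper's. Both begin by branching on the coloring of the twin-cover $X$, but you then encode the full star-coloring condition in a single ILP over roughly $2^{2^t}$ profile-variables and push the $P_4$ case analysis into the constraint list. The paper instead \emph{reduces to proper coloring}: after fixing $f:X\to[t']$ it modifies the graph (Claims~\ref{cla:claim1} and~\ref{cla:claim2} add edges between clique-groups whose $X$-neighborhoods force color-distinctness; Reduction Rule~\ref{red:1} discards all but the largest clique in each remaining group), then separately guesses, for each of the $\le 2^t$ surviving clique-types, which subset of the $t'$ colors of $X$ it reuses, and shows (Claim~\ref{cla:subgraph-b}) that what remains is an ordinary proper-coloring instance of neighborhood diversity $\le 2^t$, solvable by the known FPT algorithm. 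The paper's reduction confines the $P_4$ analysis to three short structural claims and then invokes a black box, whereas your approach carries the full case analysis inside the ILP; on the other hand, you avoid the auxiliary graph modifications and the extra layer of guessing.

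One point in your sketch needs sharpening. The within-group obstruction you flag---a bi-colored $P_4$ inside a single $\mathcal{C}_S$ when two vertices of $S$ share a $\mathcal{P}$-color---is not captured by ``forbidden profile combinations,'' because your profile records only that a color appears \emph{somewhere} in $\mathcal{C}_S$, not in how many of its cliques. The correct encoding is a \emph{numerical} constraint: whenever $S$ contains two $\mathcal{P}$-equal vertices, the number of colors whose profile includes $S$ must be at least $\sum_{K\in\mathcal{C}_S}|K|$ (rather than merely $\max_K |K|$), so that in the reconstruction step the cliques of $\mathcal{C}_S$ can receive pairwise disjoint color sets. This is exactly what Claim~\ref{cla:claim1} achieves in the paper by merging those cliques into one; once you add it as a linear inequality your ILP is correct.
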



\medskip
\noindent
\textbf{Overview of the Algorithm:} Given an instance $(G, k,t)$ of \scp{}, and a twin cover $X\subseteq V(G)$ of size $t$ in $G$, the goal is to check if there exists a star coloring of $G$ using at most $k$ colors. The algorithm consists of the following four steps. 
\begin{enumerate}
    \item We guess the coloring $f: X \rightarrow[t']$ of $X$ in a star coloring of $G$ (where $t' \leq t$). Then construct an auxiliary graph $G'$ from $G$ where the neighborhood diversity of $G'$ is bounded by a function of $t$. 

    \item We show that $G$ has a star coloring $g$, using at most $k$ colors, such that $g|_X=f$ if and only if $G'$ has a star coloring $h$, using at most $k$ colors,  such that $h|_X=f$.
    
    \item We construct a graph $\mathcal{B}$, which is a subgraph of $G'$ such that $G'$ has a star coloring $h$, using at most $k$ colors, such that $h|_X=f$ if and only if $\mathcal{B}$ has a proper coloring using at most $k-t'$ colors, where $t'=|f(X)|$.
    
    \item We show that 
    the neighborhood diversity of $\mathcal{B}$ is bounded by a function of $t$. Then we use the FPT algorithm parameterized by neighborhood diversity 
    from \cite{ganian2015improving} to check whether $\mathcal{B}$ has a proper coloring using at most $k-t'$ colors and 
    decide if there exists a star coloring of $G'$ using at most $k$ colors. 
\end{enumerate}

Given a graph $G$, there exists an algorithm to compute a twin-cover of size at most $t$ (if one exists) in $O(1.2738^t+tn)$ time~\cite{ganian2015improving}.  Hence we assume that we are given a twin-cover $X=\{v_1, v_2, \dots, v_t\}\subseteq V(G)$ of size $t$. 

Let $(G,k,t)$ be an instance of \scp{} and $X=\{v_1, v_2, \dots, v_t\}\subseteq V(G)$ be a twin-cover of size $t$ in $G$. 
That is, $G[V\setminus X]$ is a disjoint union of cliques. By the definition of twin cover, 
all vertices in a clique $K$ from $G[V\setminus X]$ has the same neighborhood in $X$. 
Similar to the proof of Theorem \ref{thm:nd}, we define subset types. 
For each $A\subseteq [t]$, 
let $T_A=\{v_i\mid i\in A\}\subseteq X$ denote a subset type of $G$. 
For every subset type $T_A$, 
we denote a \emph{clique type} of $G$ by $K_A=\{K \mid  K \mbox{ is a clique in } G[V\setminus X] \mbox{ and } N(K) \cap X=T_A \}$. 

Step 1 of the algorithm is to initially guess the colors of the vertices in $X$ in a star coloring of $G$. 
Let $f : X \rightarrow [t']$ be such a coloring, where $t' \leq t$. The rest of the proof is to check if $f$ could be extended  to a coloring $g:V(G) \rightarrow [k]$ such that $g |_{X}=f$. 
Let $X_i=f^{-1}(i)\subseteq X$ be the set of vertices from $X$ that are assigned the color $i\in [t']$ in $f$. 
We now construct an auxiliary graph $G'$ from $G$ by repeated application of the Claims \ref{cla:claim1}, \ref{cla:claim2} and the Reduction Rule \ref{red:1}. 

\begin{claim claim}\label{cla:claim1}
Let $K_A$ be a clique type  with $|K_A|\geq 2$ and there exist two vertices in $X_i\cap T_A$  
for some $i\in [t']$. 
Let $G^{\star}$ be the graph obtained from $G$ by adding additional edges between  every pair of non-adjacent vertices in $\bigcup\limits_{K\in K_A}V(K)$. 
Then 
$(G,k,t)$ is a yes-instance of \scp{} if and only if $(G^{\star}, k,t)$ is a yes-instance of \scp{}. 

\end{claim claim}

\begin{proof}
Let $K,K'\in K_A$ be two cliques 
and $u, v\in X_i\cap T_A$ (i.e., $f(u)=f(v)=i$). For the forward direction, 
let $g$ be a star coloring of $(G,k,t)$. Since $g|_X=f$  and $g$ is a star coloring, 
no two vertices in $\bigcup\limits_{K\in K_A}V(K)$ are assigned the same color. Suppose not, there exists two vertices $w,w'\in \bigcup\limits_{K\in K_A}V(K)$ such that $g(w)=g(w')$, then $w-u-w'-v$ is a bi-colored $P_4$. 
Observe that $g$ is also a star coloring of $(G^{\star}, k, t)$. 

For the reverse direction, let $h$ be a star coloring of $(G^{\star}, k, t)$ that uses at most $k$ colors. 
Since $G$ is a subgraph of $G^{\star}$, we have that $h$ is also a star coloring of $(G, k, t)$. 
\qed
\end{proof}

Notice 
that a clique type $K_A$ satisfying the assumptions of Claim \ref{cla:claim1} will now have 
$|K_A|=1$. 
We now look at the clique types $K_A$ such that $|K_A|\geq 2$ and apply the following reduction rule. 
Let $K\in K_A$ be an arbitrarily chosen  clique with maximum number of vertices.

\begin{reduction rule}\label{red:1}
Let $K_A$ be a clique type  with $|K_A|\geq 2$ and 
$|X_i\cap T_A|\leq 1$, for all $i\in [t']$. Also, let $K\in K_A$ be an arbitrarily chosen  clique with maximum number of vertices over all cliques in $K_A$. 
Then $(G, k, t)$ is a yes-instance of \scp{} if and only if $(G-\bigcup\limits_{K'\in K_A \setminus \{K\} } V(K'), k, t)$ is a yes-instance of \scp{}. 
\end{reduction rule}

\begin{lemma}\label{lem:rrs}
Reduction Rule \ref{red:1} is safe. 
\end{lemma}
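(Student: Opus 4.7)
The forward direction is immediate: if $g$ is a star coloring of $G$ with $g|_X = f$, then restricting $g$ to $V(G) \setminus \bigcup_{K' \in K_A \setminus \{K\}} V(K')$ is a star coloring of the reduced graph, since both the proper-coloring condition and the absence of a bi-colored $P_4$ pass to induced subgraphs.

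For the reverse direction, write $R = \bigcup_{K' \in K_A \setminus \{K\}} V(K')$ and $G^{\star} = G - R$, and let $g'$ be a star coloring of $G^{\star}$ with $g'|_X = f$. The plan is to extend $g'$ to a coloring $g$ of $V(G)$ by reusing the palette on $V(K)$: for each $K' \in K_A \setminus \{K\}$ fix an injection $\pi_{K'} : V(K') \hookrightarrow V(K)$, which exists because $K$ was chosen to have the maximum number of vertices in $K_A$, and set $g(v) := g'(\pi_{K'}(v))$ for every $v \in V(K')$. Properness of $g$ is routine: vertices of $V(K')$ receive distinct colors (injectivity of $\pi_{K'}$ combined with the proper coloring of the clique $V(K)$), and the only external neighbors of any $V(K')$ lie in $T_A$, which $g'$ already colors consistently with $V(K)$.

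The crux is to rule out a bi-colored path $u_1$-$u_2$-$u_3$-$u_4$ in $G$ with $g(u_1)=g(u_3)=a$ and $g(u_2)=g(u_4)=b$. To this end I would introduce the projection $\psi : V(G) \to V(G^{\star})$ given by $\psi(v) := \pi_{K'}(v)$ when $v \in V(K')$ for some $K' \in K_A \setminus \{K\}$, and $\psi(v) := v$ otherwise, so that $g(v) = g'(\psi(v))$ throughout. A brief case analysis on whether $u_i, u_{i+1}$ lie in $R$ shows that each consecutive pair $\psi(u_i)\psi(u_{i+1})$ is an edge of $G^{\star}$, using that every clique in $K_A$ has external neighborhood exactly $T_A$ and that the injections $\pi_{K'}$ map intra-clique edges into the clique $V(K)$. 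Thus $\psi(u_1),\psi(u_2),\psi(u_3),\psi(u_4)$ is a length-$3$ walk in $G^{\star}$ inheriting the color coincidences of the original path; if its four vertices are distinct, we obtain a bi-colored $P_4$ of $G^{\star}$ under $g'$, contradicting that $g'$ is a star coloring.

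The main obstacle is handling a collision $\psi(u_1) = \psi(u_3)$ (the case $\psi(u_2) = \psi(u_4)$ runs analogously). Such a collision forces both $u_1$ and $u_3$ to lie in $V(K) \cup R$ and to be routed to the same vertex of $V(K)$; since two vertices of a single clique with identical color must coincide, $u_1$ and $u_3$ belong to two distinct cliques of $\{V(K)\} \cup (K_A \setminus \{K\})$. Because the cliques of $G[V \setminus X]$ are pairwise non-adjacent, the only common outside neighbors of $u_1$ and $u_3$ are the vertices of $T_A$, so $u_2 \in T_A$; an analogous inspection of $u_4$ yields two sub-cases. Either $u_4 \in T_A$, whence $u_2, u_4 \in T_A \cap X_b$ directly violates the hypothesis $|X_i \cap T_A| \le 1$ of Reduction Rule~\ref{red:1}; or $u_4$ lies in the same clique as $u_3$, whence $u_2 \in T_A$ and $u_4$ are adjacent but share color $b$, contradicting the properness already guaranteed by $g'$ on the $T_A$-to-$V(K)$ edges (after composing with $\pi$ if $u_4 \in R$). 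This combinatorial case check, which is exactly the point where the assumption that no color appears twice inside $T_A \cap X$ is used, is the delicate part of the argument.
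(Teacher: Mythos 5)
Your proof is correct and follows essentially the same route as the paper's: the forward direction is by taking an induced subgraph, and the reverse direction recolors each deleted clique $K'$ injectively with the palette $g'(V(K))$ of the retained maximum clique $K$ and then rules out a bi-colored $P_4$. Your explicit projection $\psi$ and the collision case analysis rigorously justify the step the paper only asserts (namely, that a bi-colored $P_4$ created by the recoloring forces two equally colored vertices in $T_A$, contradicting the hypothesis $|X_i\cap T_A|\le 1$), so if anything your write-up is more complete than the original.
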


\begin{proof} 
Suppose $(G, k,t)$ is a yes-instance of \scp{}. Then it is easy to see that $(G-V(K'), k, t)$ is a yes instance of \scp{}. 
For the reverse direction, let $g$ be a star coloring of $(G-V(K'), k, t)$. 
We show how to extend $g$ to the vertices of $K'$ maintaining the star coloring requirement. 
We use  the colors from $g(K)$ (assigned to the vertices of $K$) to color the vertices of the deleted clique $K'\in K_A$. 
Every vertex in $K'$ is assigned a distinct color from $g(K)$. 
This is possible as $|K'|\leq |K|$. 

We now prove that there is no bi-colored $P_4$ in $G$. 
Suppose not. Let there exist a bi-colored $P_4$ in $G$ because of the coloring assigned to $K'$. Notice that this happens 
only when there exists two vertices in $T_A$ that are assigned the same color. 
In this case, we would have applied our Claim \ref{cla:claim1}, which is a contradiction to the fact that $K_A$ does not satisfy the assumptions of Claim \ref{cla:claim1}. 
\qed
\end{proof}

We repeatedly apply Reduction Rule \ref{red:1} on the clique types $K_A$ for which $|K_A|\geq 2$ after the application of Claim \ref{cla:claim1}. Thereby ensuring  $|K_A|= 1$ for all clique types $K_A$ for which  $|K_A|\geq 2$ in $G$. 
Thus for all clique types $K_A$, we have that $|K_A|\leq 1$. Notice that after the application of Claim~\ref{cla:claim1} and Reduction
Rule~\ref{red:1}, the resulting graph has bounded neighborhood diversity. However, a proper coloring of the resulting graph may not yield a star coloring.  The following claim help us to reduce our problem to proper coloring parameterized by neighborhood diversity. 


\begin{claim claim}\label{cla:claim2}
Let $K_A$ and $K_B$, with $A\neq B$, be two clique types such that there exists two vertices $u,v\in X_i$ such that $u\in T_A\cap T_B$ and $v\in T_B$,  
for some $i\in [t]$. 
Let 
$G^{\star}$ be the graph obtained from $G$ by adding  additional edges between every pair of non-adjacent vertices in 
$V(K_A)\cup V(K_B)$. Then 
$(G,k,t)$ is a yes-instance of \scp{} if and only if $(G^{\star}, k,t)$ is a yes-instance of \scp{}. 
\end{claim claim}
\begin{proof}
For the forward direction, 
let $g$ be a star coloring of $(G,k,t)$. This implies that no two vertices in $V(K_A)\cup V(K_B)$ are assigned the same color because of the vertices $u$ and $v$. Hence $g$ is also a star coloring of $(G^{\star}, k, t)$. 

The reverse direction is trivial. Since $G$ is a subgraph of $G^{\star}$, the star coloring of $(G^{\star}, k, t)$ is also a star coloring of $(G, k, t)$. 
\qed 
\end{proof}
We are now ready to  explain the steps of our algorithm in detail. 

\medskip
\noindent
\textbf{Step 1:} Given an instance $(G, k, t)$ of \scp{}, we construct an auxiliary graph. 
The graph constructed after repeated application  of Claim \ref{cla:claim1}, Reduction Rule \ref{red:1} and Claim \ref{cla:claim2}, 
is the auxiliary graph $G'$. 
We now argue that the neighborhood diversity of $G'$ is bounded by a function of $t$. 
Consider the partition  $\{V(K_A)~|~A \subseteq [t]\} \cup \{\{v_i\}~|~ v_i \in X\}$ of $V(G')$. 
Notice that each clique type $K_A$ of $G'$ is a clique type (see 
Section \ref{sec:nd} for more details). 
That is, all the vertices in $K_A$ have the same neighborhood in $X$. This is true because initially all vertices in $K_A$ have the same neighborhood (by definition of twin cover) and during the process of adding edges (Claims \ref{cla:claim1} and \ref{cla:claim2}), 
either all the vertices in $K_A$ are made adjacent to all the vertices in a type $K_B$ ($A\neq B$) or none of them are adjacent to any vertex in $K_B$. 
Thus the number of such types is at most  $2^t$. 
Including the vertices of $X$, we have that the neighborhood diversity of $G'$ is at most $2^t +t$. 


\medskip\noindent \textbf{Step 2:} We need to show that $(G, k, t)$ is a yes-instance of \scp{} if and only if $(G', k, t)$ is a yes-instance of \scp{}. This is accomplished by the correctness of the Claims \ref{cla:claim1}, \ref{cla:claim2} and the Reduction rule \ref{red:1}.

\medskip\noindent \textbf{Step 3:} 
The next step of the algorithm is to find a set of colors from $[t']$ that can be assigned to the vertices in $V\setminus X$. 
Towards this, 
for each $A \subseteq[t]$,  we guess a subset of colors $D_A \subseteq [t']$  of size at most $|V(K_A)|$, 
that can be assigned to the vertices in the clique type $K_A$ in a star coloring of $G$ (extending the coloring $f$ of $X$)  that uses at most $k$ colors.  
For the guess $D_A$, we arbitrarily (as it does not matter which vertices are assigned a specific color) assign colors from $D_A$ to vertices in $K_A$ such that $|D_A|$ vertices in $K_A$ are colored distinctly. Given the guess $D_A$ for each $K_A$, 
we can check in $2^{O(t)}$ time 
if the color set $D_A$ associated with $K_A$ is indeed a proper coloring (considering the coloring $f$ of $X$ and its  neighboring types). 
In a valid guess,  
some vertices  $Q\subseteq V(G')\setminus X$  are assigned colors from $[t']$. 
The uncolored vertices of $G'$ should be given a color from $[k]\setminus [t']$. 
Let $g:X\cup Q\rightarrow [t']$ be a coloring such that $g(v)=f(v)$ if $v \in X$, and 
$g(v)=\ell$, where $\ell$ is the assigned color as per the above greedy assignment, if $v \in Q$

We now extend this partial coloring $g$ of $Q\cup X$ 
to a full coloring of $G'$, 
where the vertices in $V(G')\setminus (Q\cup X)$ are assigned colors from $[k]\setminus [t']$. 
Let $\mathcal{B}$ be the subgraph of $G'$ obtained by deleting the vertices $Q \cup X$ from $G'$. 
Notice that $\mathcal{B}$ has neighborhood diversity at most $2^t$. 
\begin{claim claim}\label{cla:subgraph-b}
There exists a star coloring of $G'$ extending $g$ 
using at most $k$ colors 
 if and only if there exists a proper coloring of $\mathcal{B}$ using at most $k-t'$ colors. 
\end{claim claim}

\begin{proof}
Let $h: V(G') \rightarrow [k]$ be a star coloring of $G'$ such that $h|_{Q\cup X}=g$. Clearly $|h(\mathcal{B})|=|h(V \setminus (Q\cup X))| \leq k-t'$. That is, $h$ restricted to the vertices of $\mathcal{B}$ is a proper coloring of $\mathcal{B}$ which uses at most $k-t'$ colors. 

For the reverse direction, let $c:V(\mathcal{B})\rightarrow [k-t']$ be a proper coloring. We construct a coloring $h:V(G')\rightarrow [k]$ using the coloring $c$ as follows: 
$h(v)=c(v)$ if $v \in \mathcal{B}$, and  $h(v)=g(v)$ otherwise.
We show that $h$ is a star coloring of $G'$. 
Suppose not, without loss of generality, let $u_1-u_2-u_3-u_4$ be a bi-colored 
$P_4$, with $u_1\in K_{A_1}$, $u_2\in X$, $u_3\in K_{A_2}$ and $u_4\in X$ (notice that this is the only way a bi-colored $P_4$ exists), for some $A_1, A_2\subseteq [t]$. 
That is, $c(u_1)=c(u_3)$ and $c(u_2)=c(u_4)$. Also, $A_1\neq A_2$ because of the proper coloring. If this were the case, 
we would have applied Claim \ref{cla:claim2} as 
$K_{A_1}$ and $K_{A_2}$ satisfy the assumptions along with coloring of the vertices $u_2$ and $u_4$ in $X$. 
As a consequence, each vertex in $K_{A_1}$ would have been adjacent to each vertex in $K_{A_2}$. 
\qed 
\end{proof}



%
\medskip\noindent \textbf{Step 4:} 
It is known that proper coloring is FPT parameterized by neighborhood diversity \cite{ndrobert}. The algorithm in \cite{ndrobert} uses integer linear programming with $2^{2k}$ variables, where $k$ is the neighborhood diversity of the graph. 
Since $\mathcal{B}$ has neighborhood diversity at most $2^t$, we have that the number of variables $q\leq 2^{2^t}$. 
We use the algorithm to test whether 
$\mathcal{B}$ has a proper coloring using at most $k-t'$ colors. 


\medskip 
\noindent{}\textbf{Running time:} Step 1 of the algorithm takes $O(t^t)$ time to guess a coloring of $X$. 
Reduction Rule \ref{red:1}, Claims \ref{cla:claim1} and \ref{cla:claim2} can be applied in $2^{O(t)} n^{O(1)}$ time.  
Step 2 can be processed in $2^{O(t)} n^{O(1)}$ time. 
Step 3 involves guessing the colors that the clique types can take from the colors used in $X$ and this takes $O(2^{2^t})$ time. 
Constructing $\mathcal{B}$ takes polynomial time. 
Step 4 is applying the FPT algorithm parameterized by neighborhood diversity from \cite{ndrobert} on $\mathcal{B}$ which takes  $O(q^{2.5q+o(q)} n)$ time where $q\leq 2^{2^t}$. 
The latter dominates the running time and hence the running time of the algorithm is 
$O(2^{2^t}q^{2.5q+o(q)}n^{O(1)})$, where $q\leq 2^{2^t}$.

 This completes the proof of Theorem \ref{thm:tw}. 
 
\section{Clique-width}\label{sec:cw}

In this section, we show that \scp{} is FPT when parameterized by combined parameter
clique-width and the number of colors. 
We first give the definition of clique-width.



\begin{definition}[Clique-width \cite{courcellecw}]
	Let $w \in \mathbb{N}$.
	A $w$-expression $\Phi$ defines a graph $G_\Phi$ where each vertex of $G$ receives a label from the set $[w]$,
	using the following four recursive operations with indices $i,j \in [w]$, $i\neq j$:

	\begin{enumerate}
	\item  Introduce, $\Phi=v(i)$: 
    $G_{\Phi}$ is a graph consisting a single vertex $v$ with label $i$. 
	\item Disjoint union,  $\Phi=\Phi' \oplus \Phi''$:   $G_{\Phi}$ is a disjoint union of the graphs $G_{\Phi'}$ and $G_{\Phi''}$. 
		
	\item  Relabel, {\bf $\Phi= \rho_{i\rightarrow j}(\Phi')$}:
	$G_\Phi$ is the graph $G_{\Phi'}$ where each vertex labeled $i$ in $G_{\Phi'}$ now has label $j$.
	\item Join, $\Phi=\eta_{i,j}(\Phi')$:
	$G_\Phi$ is the graph $G_{\Phi'}$ with additional edges between each pair of vertices $u$
	    of label $i$ and $v$ of label $j$.
	\end{enumerate}
The \emph{clique-width} of a graph $G$ denoted by \emph{cw(G)} is the minimum integer $w$
    such that there is a $w$-expression $\Phi$ that defines $G$. 
\end{definition}

Given a graph $G=(V,E)$ and an integer $k$, there is an FPT-algorithm that either reports $\operatorname{cw}(G) > w$ or outputs a $(2^{3w+2}-1)$-expression of $G$ \cite{DBLP:journals/jct/OumS06}. 
Hence, we assume that a $w$-expression $\Psi$ of $G$ is given. A $w$-expression $\Psi$ is a \emph{nice} $w$-expression of $G$, 
if no edge is introduced twice in $\Psi$. 
Given a $w$-expression of $G$, it is possible to get a nice $w$-expression of $G$ in polynomial time~\cite{courcellecw}.
For more details on clique-width we refer the reader to~\cite{hlinveny2008width}. 

The main result of this section is the following. 

\begin{theorem}\label{thm:cw}
Given a graph $G$, its nice $w$-expression and an integer $k$, we can decide if there exists a star coloring of $G$ using $k$ colors in $O((3^{w^3k^2+w^2k^2})^2 n^{O(1)})$ time. 
\end{theorem}

\begin{proof}
Let $\Psi$ be a nice $w$-expression of the graph $G$. 
We give a dynamic programming algorithm over $\Psi$. 
 For each subexpression $\Phi$ of $\Psi$ and a coloring  $C:V(G_{\Phi}) \rightarrow [k]$ of $G_{\Phi}$,  we have a boolean table entry $d[\Phi;N;A;B]$ where 
$$N=n_{1,1},\dots n_{1,k},\dots, n_{w,1}, \dots n_{w,k}\mbox{, } $$
$$A=A_1, A_2, \dots, A_w,  \mbox{ where for each } i\in [w]  \mbox{ and } q,q'\in [k],$$
$$  A_i=\{A_{i,\{j,\ell\}}^{q,q'} \mid j,\ell\in [w]\mbox{ and }  j\neq \ell \}\cup \{A_{i, \{j,j\}}^{q, q'}
\mid j\in [w]\} \mbox{, and }$$
$$\mbox{for each } q,q'\in [k], B=\{B_{j, \ell}^{q,q'}\mid 
j, \ell\in [w] \}.
$$

Given some vertex coloring of $G_\Phi$, we explain the meaning of each of the variables below. 

\medskip
\noindent
\textbf{N:} For each label $i\in [w]$ and color $q\in [k]$, the variable $n_{i,q}\in \{0,1,2\}$. Let $n_{i,q}^*$ denote the number of number of vertices with label $i$ and color $q$. Then $n_{i,q}=\max\{2, n_{i,q}^*\}$. The number of variables in $N$ is $wk$. 

\medskip
\noindent
\textbf{A:} Let $L=\{Y\subseteq [w] : |Y|=2\}$. 
That is $|L|=\binom{w}{2}$. 
For each label $i$ and set $\{j,\ell\}\in L$  or set $\{j, j\}\in L$ where $j, \ell\in [w]$,   the variable $A_{i,\{j,\ell\}}^{q,q'}\in \{0, 1, 2\}$. 
 Let $\widehat{A}_{i,\{j,\ell\}}^{q,q'}$ denote the number of vertices with label $i$ and color $q$ such that 
there exists two neighbors assigned the color $q'$, one with label $j$ and the other with label $\ell$. Then 
$A_{i,\{j,\ell\}}^{q,q'}= \max\{2, \widehat{A}_{i,\{j,\ell\}}^{q,q'}\}$. 
 Notice that the number of variables in $A$ is $k^2w(\binom{w}{2}+w)$. 

\medskip 
\noindent
\textbf{B:} 
For each pair of labels $i,j\in [w]$ and pair of colors $q, q'\in [k]$, the variable $B_{i,j}^{q,q'}\in \{0, 1, 2\}$. 
 Let $\widehat{B}_{i,j}^{q,q'}$ denote the number of vertices with label $i$ assigned the color $q$ such that 
there exists a neighbor  with label $j$ assigned the color $q'$. 
Then $B_{i,j}^{q,q'}= \max\{2, \widehat{B}_{i,j}^{q,q'}\}$. The number of variables in $B$ is $(wk)^2$. 

Note that for each $i, j, \ell\in [w]$ and colors $q, q'\in [k]$, 
the variable $A_{i,\{j,\ell\}}^{q,q'}$ (resp. $B_{i, j}^{q,q'}$) corresponds to the number of vertices, limited to a maximum of 2,  with label $i$ (resp. label $i$) adjacent to vertices with labels $j$ and $\ell$ (resp. label $j$) that are assigned the color $q'$. The primary difference between the variables is that the variables in $B$ are defined for each pair of labels $i,j\in [w]$ while the variables in $A$ are defined for each combination of $i$ and label set $\{j,\ell\}\in L$. 

 
 For each subexpression $\Phi$ of $\Psi$, a boolean entry $d[\Phi;N;A;B]$ is set to TRUE if and only if 
 there is a vertex coloring $C:V(G_\Phi)\rightarrow[k]$ that satisfies the variables 
 $n_{i,q}$, $A_{i,\{j,\ell\}}^{q,q'}$ and $B_{i,j}^{q,q'}$ for each $i, j, \ell\in [w]$ and colors $q, q'\in [k]$. 
 If there is no coloring satisfying the variables entries in   $d[\Phi;N;A;B]$, then we set then the entry to FALSE.
We say that $G$ has a star coloring using $k$ colors if and only if there exists an entry $d[\Psi;N;A;B]$ that is set to TRUE. 

We now give the details on how to compute an entry $d[\Phi;N;A;B]$ at each operation. 
\begin{enumerate}\setlength\itemsep{1.2em}
  \item $\Phi=v(i)$. 
  
  $G_{\Phi}$ is the graph with one vertex $v$ with label $i$. The vertex $v$ could take any of the colors from $[k]$, and hence we set $k$ entries to be TRUE, one for each color. 
  That is, the entry $d[\Phi; N;A;B]$ is set to  TRUE if and only if 
  $n_{i, q}=1$ for each color $q\in [k]$ and all other variables of $N$, $A$ and $B$ are 0.

   \item $\Phi= \Phi' \oplus \Phi''$
   
       $G_{\Phi}$ is a  disjoint union of the graphs $G_{\Phi'}$ and $G_{\Phi''}$. 
       We set the entry $d[\Phi;N;A;B]$ to TRUE 
    if and only if there exist entries $d[\Phi';N';A';B']$
    and $d[\Phi''; N'';A'';B'']$
    such that both the entries are TRUE 
    and the following conditions hold:  
    \begin{enumerate}[label=(\roman*)]
        \item For each $i\in [w]$ and $q\in [k]$, 
        $n_{i,q}=\min\{2, n'_{i,q}+n''_{i,q} \}$.

    \item For each $i\in [w]$, set $\{j,\ell\}\in L$  and $q, q'\in [k]$, 
    $A_{i,\{j,\ell\}}^{q,q'}=\min\{2, A_{i,\{j,\ell\}}^{'q,q'}+A_{i,\{j,\ell\}}^{''q,q'}\}$. 
    
       \item For each $i, j\in [w]$,  and $q, q'\in [k]$, 
    $A_{i,\{j,j\}}^{q,q'}=\min\{2, A_{i,\{j,j\}}^{'q,q'}+A_{i,\{j,j\}}^{''q,q'}\}$.

    \item For each $i, j\in [w]$ 
    and $q, q'\in [k]$, 
    $B_{i,j}^{q,q'}=\min\{2, B_{i,j}^{'q,q'}+B_{i,j}^{''q,q'}\}$.


    \end{enumerate}

   We now explain how to set an entry $d[\Phi;N;A;B]$ to TRUE. 
   Initially, we set all the entries in $G_\Phi$ to FALSE. 
   We consider all possible combination of entries $d[\Phi';N';A';B']$ and $d[\Phi'';N'';A'';B'']$ that are TRUE. 
   That is, for each pair of entries $d[\Phi';N';A';B']=$TRUE and $d[\Phi'';N'';A'';B'']$=TRUE, we update the corresponding entry $d[\Phi;N;A;B]$ satisfying the above conditions. 
    The number of such combinations is  $O((3^{w^3k^2+w^2k^2})^2)$ and we can compute the values of $N,A$ and $B$ in $O(w^3k^2)$ time. 
   
  \item  $\Phi=\rho_{i \rightarrow j} (\Phi')$.
  
	$G_{\Phi}$ is obtained from the graph $G_{\Phi'}$ by relabelling the vertices of label $i$ in $G_{\Phi'}$ with label $j$ where $i,j\in [w]$. 
	We set an entry $d[\Phi;N;A;B]$ to TRUE  if and only if there exists an entry $d[\Phi';N';A';B']$ in $G_{\Phi'}$ such that 
	$d[\Phi';N';A';B']$ is TRUE and the following conditions hold: 
	\begin{enumerate}[label=(\roman*)]
	    \item For each color $q\in [k]$ and
	    label $\ell\in [w]\setminus \{i,j\}$, 
			$n_{\ell,q}= n'_{\ell,q}$. 
	\item 	For each color $q\in [k]$,  
			$n_{j,q}= \min\{2, n'_{j,q}+n'_{i,q}\}$ and $n_{i,q}=0$.

			\item For each pair of colors $q, q'\in [k]$, the variables in $B$ are calculated as follows. 
			

	\noindent\fbox{
		\begin{minipage}{0.8\textwidth}
			\begin{tabular*}{\textwidth}{@{\extracolsep{\fill}}lr} \textbf{Computation of variables in $B$}  \\ \end{tabular*}
			\begin{itemize}
        			
	    \item For each pair of labels $a, b\in [w]\setminus \{i,j\}$,  
			$B_{a,b}^{q, q'}= B_{a,b}^{'q,q'}$.


	\item 	For 
	each label $a\in [w]\setminus \{i,j\}$,  
			$B_{a,j}^{q,q'}= \min\{2, B_{a,j}^{'q,q'}+B_{a,i}^{'q,q'}\}$ and 
			$B_{j,a}^{q,q'}= \min\{2, B_{j,a}^{'q,q'}+B_{i,a}^{'q,q'}\}$.

	\item $B_{j,j}^{q,q'}= \min\{2, B_{j,j}^{'q,q'}+B_{i,i}^{'q,q'}+B_{i,j}^{'q,q'}+B_{j,i}^{'q,q'}\}$. 
			
			\item 	For each label $a\in [w]$,  $B_{a, i}^{q,q'}= 0$ and $B_{i,a}^{q,q'}= 0$. 
    \end{itemize}
		\end{minipage}
	}

		\item For each pair of colors $q, q'\in [k]$, the variables in A are calculated as follows. 	
			
			

	\noindent\fbox{
		\begin{minipage}{0.8\textwidth}
			\begin{tabular*}{\textwidth}{@{\extracolsep{\fill}}lr} \textbf{Computation of variables in $A$}  \\ \end{tabular*}
		 \begin{itemize}

		\item For each $\ell, a, b\in [w]\setminus \{i,j\}$ and the set $\{a,b\}\in L$, 	
	    $A_{\ell,\{a,b\}}^{q,q'}=A_{\ell,\{a,b\}}^{'q,q'}$. 
	
	\item For each $\ell, a\in [w]\setminus \{i,j\}$, 
	$A_{\ell,\{a,a\}}^{q,q'}=A_{\ell,\{a,a\}}^{'q,q'}$. 				
						
		\item 	For   
		each $\ell, a\in [w]\setminus \{i, j\}$, 
	$A_{\ell,\{j,a\}}^{q,q'}=\min\{2, A_{\ell,\{i,a\}}^{'q,q'}+A_{\ell,\{j,a\}}^{'q,q'}\}$.

	\item 	For each $\ell \in [w]\setminus \{i, j\}$, 
	$A_{\ell,\{j,j\}}^{q,q'}=\min\{2, A_{\ell,\{j,j\}}^{'q,q'}+A_{\ell,\{i,j\}}^{'q,q'}+A_{\ell,\{i,i\}}^{'q,q'}\}$.

		\item 	For each $a, b\in [w]\setminus \{i, j\}$  and $\{a, b\}\in L$,  
	$A_{j,\{a,b\}}^{q,q'}=\min\{2, A_{j,\{a,b\}}^{'q,q'}+A_{i,\{a,b\}}^{'q,q'}\}$.

		\item 	For each $a\in [w]\setminus \{i, j\}$,  
	$A_{j,\{a,a\}}^{q,q'}=\min\{2, A_{j,\{a,a\}}^{'q,q'}+A_{i,\{a,a\}}^{'q,q'}\}$.

		\item For each $a\in [w]\setminus \{i, j\}$, 
		$A_{j,\{a,j\}}^{q,q'}=\min\{2, A_{j,\{a,j\}}^{'q,q'}+A_{i,\{a,j\}}^{'q,q'}+A_{j,\{a,i\}}^{'q,q'}+A_{i,\{a,i\}}^{'q,q'}\}$.

		\item $A_{j,\{j,j\}}^{q,q'}=\min\{2, A_{j,\{j,j\}}^{'q,q'}+A_{i,\{j,j\}}^{'q,q'}+A_{j,\{i,i\}}^{'q,q'}+A_{i,\{i,i\}}^{'q,q'}+A_{j,\{j,i\}}^{'q,q'}+A_{i,\{j,i\}}^{'q,q'}\}$. 
				
		\item For each $a, b, \ell\in [w]$, 
		$A_{\ell, \{a,b\}}^{q,q'}=0$ if  $a=i$ or $b=i$ or $\ell=i$. 
		
		\end{itemize}
		\end{minipage}
	}

	\end{enumerate}

	 	 We initially set each entry $d[\Phi;N;A;B]$ to be FALSE for each combination of $N, A$ and $B$. We consider all possible entries $d[\Phi';N';A';B']$ such that $d[\Phi';N';A';B']$ is TRUE and set the corresponding entry $d[\Phi;N;A;B]$ to TRUE based on the values computed using the above rules.  The number of entries to check is $O(3^{w^3k^2+w^2k^2})$ and we can compute the values of $N,A$ and $B$ in $O(w^3k^2)$ time.

 \item $\Phi=\eta_{i,j} (\Phi')$
 
$G_\Phi$ is obtained by connecting each vertex of label $i$ with each vertex of label $j$ in $G_{\Phi'}$. 
To ensure a proper coloring, we consider the entries 
$d[\Phi';N';A';B']$ that are set to TRUE in $G_{\Phi'}$ and has the property that, for each $q\in [k]$, 
if $n'_{i,q}\geq 1$ then $n'_{j,q}=0$ and vice-versa. 
This condition ensures that the coloring obtained after the $\eta_{i,j}(\Phi')$ operation is indeed a proper coloring. 
It may be the case that $n'_{i,q}=0$ and $n'_{j,q}=0$, which implies that there are no vertices with labels $i$ and $j$ with the same color $q$. 

Before we proceed to the conditions on how to set an entry $d[\Phi;N;A;B]$ to TRUE, we look at each entry $d[\Phi';N';A';B']$ in $G_{\Phi'}$ that is set to TRUE and check if any of
the following four cases are met. The four cases are illustrated in Figure \ref{figure:star}.

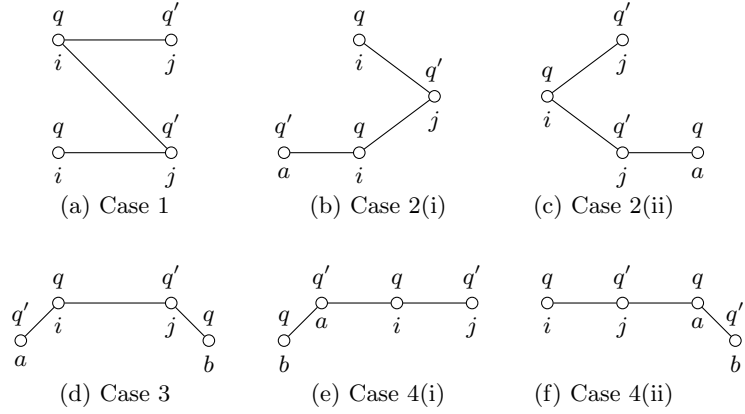
\begin{figure}
\vspace{-0.3cm}
\begin{center}
\begin{tikzpicture}
[scale=0.5,auto=left, node/.style={circle,fill=white, draw, scale=0.5}
	,max/.style={circle,fill=black, draw, scale=1}]

	\node[node] (a1) at (-1,1) {};
	\node[node] (a2) at (-1,-2) {};
	\node[node] (a3) at (-4,1) {};
	\node[node] (a4) at (-4, -2) {};

	\node [above] at (a1.north) {$q'$};
	\node [above] at (a2.north) {$q'$};
	\node [above] at (a3.north) {$q$};
	\node [above] at (a4.north) {$q$};	
	\node [below] at (a1.south) {$j$};
	\node [below] at (a2.south) {$j$};
	\node [below] at (a3.south) {$i$};
	\node [below] at (a4.south) {$i$};

	\node[node] (b1) at (4,1) {};
	\node[node] (b2) at (4,-2) {};
	\node[node] (b3) at (6,-0.5) {};
	\node[node] (b4) at (2, -2) {};

	\node [above] at (b1.north) {$q$};
	\node [above] at (b2.north) {$q$};
	\node [above] at (b3.north) {$q'$};
	\node [above] at (b4.north) {$q'$};	
	\node [below] at (b1.south) {$i$};
	\node [below] at (b2.south) {$i$};
	\node [below] at (b3.south) {$j$};
	\node [below] at (b4.south) {$a$};

		\node[node] (c1) at (11,1) {};
	\node[node] (c2) at (11,-2) {};
	\node[node] (c3) at (9,-0.5) {};
	\node[node] (c4) at (13, -2) {};

	\node [above] at (c1.north) {$q'$};
	\node [above] at (c2.north) {$q'$};
	\node [above] at (c3.north) {$q$};
	\node [above] at (c4.north) {$q$};	
	\node [below] at (c1.south) {$j$};
	\node [below] at (c2.south) {$j$};
	\node [below] at (c3.south) {$i$};
	\node [below] at (c4.south) {$a$};

	\node[node] (d1) at (-1,-6) {};
	\node[node] (d2) at (0,-7) {};
	\node[node] (d3) at (-4,-6) {};
	\node[node] (d4) at (-5, -7) {};

	\node [above] at (d1.north) {$q'$};
	\node [above] at (d2.north) {$q$};
	\node [above] at (d3.north) {$q$};
	\node [above] at (d4.north) {$q'$};	
	\node [below] at (d1.south) {$j$};
	\node [below] at (d2.south) {$b$};
	\node [below] at (d3.south) {$i$};
	\node [below] at (d4.south) {$a$};

		\node[node] (e1) at (5,-6) {};
	\node[node] (e2) at (7,-6) {};
	\node[node] (e3) at (3,-6) {};
	\node[node] (e4) at (2, -7) {};

	\node [above] at (e1.north) {$q$};
	\node [above] at (e2.north) {$q'$};
	\node [above] at (e3.north) {$q'$};
	\node [above] at (e4.north) {$q$};	
	\node [below] at (e1.south) {$i$};
	\node [below] at (e2.south) {$j$};
	\node [below] at (e3.south) {$a$};
	\node [below] at (e4.south) {$b$};

		\node[node] (f1) at (13,-6) {};
	\node[node] (f2) at (14,-7) {};
	\node[node] (f3) at (11,-6) {};
	\node[node] (f4) at (9, -6) {};

	\node [above] at (f1.north) {$q$};
	\node [above] at (f2.north) {$q'$};
	\node [above] at (f3.north) {$q'$};
	\node [above] at (f4.north) {$q$};	
	\node [below] at (f1.south) {$a$};
	\node [below] at (f2.south) {$b$};
	\node [below] at (f3.south) {$j$};
	\node [below] at (f4.south) {$i$};

	\node [above] at (-2.5,-4) {(a) Case 1};
	\node [above] at (4.5,-4) {(b) Case 2(i)};
	\node [above] at (10.5,-4) {(c) Case 2(ii)};
	\node [above] at (-2.5,-9) {(d) Case 3};
	\node [above] at (4.5,-9) {(e) Case 4(i)};
	\node [above] at (10.5,-9) {(f) Case 4(ii)};

	\draw (a1) -- (a3) -- (a2) -- (a4); 
	\draw (b1) -- (b3) -- (b2) -- (b4); 
	\draw (c1) -- (c3) -- (c2) -- (c4); 
	\draw (d2) -- (d1) -- (d3) -- (d4); 
	\draw (e2) -- (e1) -- (e3) -- (e4); 
    \draw (f2) -- (f1) -- (f3) -- (f4);

\end{tikzpicture}
\end{center}
\vspace{-0.3cm}
\caption{Various cases of bicolored $P_4$ that may arise out of the operation $\Phi=\eta_{i,j}(\Phi')$. Each vertex is represented by $\circ$, where its label and color are represented by the values below and above ``$\circ$'' respectively.  }
\label{figure:star}
\vspace{-0.5cm}
\end{figure}

 \begin{itemize}

     \item \textbf{Case 1:} There exists a pair of colors $q, q'\in [k]$ such that $n'_{i, q}=2$ and $n'_{j,q'}=2$.

     \item \textbf{Case 2:} There exists a pair of colors $q, q'\in [k]$ and label $a\in [w]\setminus \{i,j\}$ 
     such that either (i) 
     $n'_{i, q}= 2$,  $n'_{j,q'}= 1$ and $B_{i,a}^{'q,q'}\geq 1$,  
     or 
     (ii) $n'_{i, q}= 1$,  $n'_{j,q'}= 2$ and $B_{j,a}^{'q',q}\geq 1$.

    \item \textbf{Case 3:} There exists a pair of colors 
    $q, q'\in [k]$ and labels $a, b\in [w]\setminus \{i, j\}$ 
    such that $n_{i,q}=1$, $n_{j,q'}=1$, 
     $B_{i,a}^{'q, q'}\geq 1$ and $B_{j,b}^{'q',q}\geq 1$. 

     \item \textbf{Case 4:} There exists a pair of colors 
    $q, q'\in [k]$ and  labels $a, b\in [w]\setminus \{i,j\}$  
    such that $n_{i,q}=1$, $n_{j,q'}=1$, 
     and either $A_{a, \{i,b\}}^{'q', q}\geq 1$ 
     or 
     $A_{a, \{j,b\}}^{'q, q'}\geq 1$. 

\end{itemize}
 
We consider only those entries $d[\Phi';N';A';B']$ that are set to TRUE and do not satisfy any of the above four cases. 
If an entry satisfies any of the above four cases, then the $\eta_{i,j}(\Phi')$ results in a bi-colored $P_4$ 
and hence it is 
not considered for further processing. 
If none of the above cases are met, then we consider the entry $d[\Phi';N';A';B']$ that is set to TRUE  for further processing. 

We set an entry $d[\Phi;N;A;B]$ to be TRUE  if and only if there exists an entry $d[\Phi';N';A';B']$ 
in $G_{\Phi'}$ such that 
	$d[\Phi';N';A';B']$ is set to TRUE, not satisfying any of the above four cases 
	and the following conditions are met: 

    \begin{enumerate}[label=(\roman*)]
        \item  For each $q\in [k]$ and
	  $a\in [w]$, 
			$n_{a,q}= n'_{a,q}$. 
			
	    \item  For each $q, q'\in [k]$ and 
	    $a \in [w]\setminus \{i,j\}$, $b\in [w]$, 
			$B_{a,b}^{q,q'}= B_{a, b}^{'q,q'}$.

			 \item  For each $q, q'\in [k]$ and
	    $a \in [w]\setminus \{j\}$, $b\in [w]\setminus \{i\}$, 
	    $B_{i, a}^{q,q'}= B_{i, a}^{'q,q'}$ and $B_{j, b}^{q,q'}= B_{j, b}^{'q,q'}$.  
	    
	    \item For  each $q, q'\in [k]$, we have  $B_{i,j}^{q,q'}=\min\{2,n'_{i,q}\}$ if $n'_{j,q'}\geq 1$ 
	    and $B_{i,j}^{q,q'}=0$ otherwise. 
	    Similarly, we have 
	    $B_{j,i}^{q,q'}=\min\{2, n'_{j,q}\}$ if $n'_{i,q'}\geq 1$ and 
	    $B_{j,i}^{q,q'}=0$ otherwise. Note that $B_{i,j}^{'q,q'}=0$ and $B_{j,i}^{'q,q'}=0$ in $G_{\Phi'}$ because $\Phi'$ is a nice $w$-expression of $G_{\Phi'}$. 
	    
	   		\item For each pair of colors $q, q'\in [k]$, we compute the variables in $A$ as follows:

	\noindent\fbox{
		\begin{minipage}{0.8\textwidth}
			\begin{tabular*}{\textwidth}{@{\extracolsep{\fill}}lr} \textbf{Computation of variables in $B$}  \\ \end{tabular*}
		 \begin{itemize}
        \item For each label set  $\{a, b\}\in L$ and $\ell\in [w]\setminus \{i,j\}$, we have 
	    $A_{\ell, \{a, b\}}^{q,q'}=A_{\ell, \{a, b\}}^{'q, q'}$.
	    	    
	  \item For each $a\in [w]$ and $\ell\in [w]\setminus \{i,j\}$, 
	   $A_{\ell, \{a, a\}}^{q,q'}=A_{\ell, \{a, a\}}^{'q, q'}$.

	  \item For each label set $\{a, b\}\in L$ and $a, b\in [w]\setminus \{j\}$, 
	   $A_{i,\{a,b\}}^{q,q'}=A_{i,\{a,b\}}^{'q,q,'}$. Also for each $a\in [w]\setminus \{j\}$, $A_{i,\{a,a\}}^{q,q'}=A_{i,\{a,a\}}^{'q,q,'}$.

	 \item For each label set $\{a, b\}\in L$ and $a, b\in [w]\setminus \{i\}$, 
	   $A_{j,\{a,b\}}^{q,q'}=A_{j,\{a,b\}}^{'q,q,'}$.   
	   Also for each $a\in [w]\setminus \{i\}$, $A_{j,\{a,a\}}^{q,q'}=A_{j,\{a,a\}}^{'q,q,'}$.
	   
	 \item For each label $a \in [w]\setminus \{j\}$, we have $A_{i,\{j,a\}}^{q,q'}=0$ if $n'_{j,q'}=0$. 
	 Else we have 
	    $A_{i,\{j,a\}}^{q,q'}=B_{i,a}^{'q,q'}$. 
	    
	    \item For each label $a \in [w]\setminus \{i\}$, we have $A_{j,\{i,a\}}^{q,q'}=0$ if $n'_{i,q'}=0$. 
	 Else we have 
	    $A_{j,\{i,a\}}^{q,q'}=B_{j,a}^{'q,q'}$. 


	    \item $A_{i,\{j,j\}}^{q,q'}=n'_{i,q}$ if $n'_{j,q'}=2$ and $A_{i,\{j,j\}}^{q,q'}=0$ otherwise. 
	    \item 
	    $A_{j,\{i,i\}}^{q,q'}=n'_{j,q}$ if $n'_{i,q'}=2$ and $A_{j,\{i,i\}}^{q,q'}=0$ otherwise. 
			
   \end{itemize}
		\end{minipage}
	}

    \end{enumerate}
    			
		We initially set all the entries in $G_\Phi$ to be FALSE. For each TRUE entry $d[\Phi';N';A';B']$, we check if all the above conditions are met (besides not falling into any of the four cases) and then assign the respective entry $d[\Phi;N;A;B]$ to be TRUE. 
	    The number of entries to check is $O(3^{w^3k^2+w^2k^2})$ and we can compute the values of $N,A$ and $B$ in $O(w^3k^2)$ time.

\end{enumerate}	   
    The correctness of the algorithm follows from the description of the algorithm. The time taken by the algorithm is $O((3^{w^3k^2+w^2k^2})^2n^{O(1)})$. 

\qed
\end{proof}

\section{Conclusion}
In this paper, we study the parameterized complexity of \scp{} with respect to several structural graph parameters. 
We show that \scp{} is FPT when parameterized by (a) neighborhood diversity, (b) twin cover, and (c) the combined parameter clique-width and the number of colors. 

We conclude the paper with the following open problems for further research. 
\begin{enumerate}
    \item What is the parameterized complexity of \scp{} when parameterized by distance to cluster or distance to co-cluster? 
    \item It is known that graph coloring  admits a polynomial kernel when parameterized by distance to clique~\cite{gutin2021parameterized}. Does \scp{} also admit a polynomial kernel parameterized by distance to clique?
\end{enumerate}

\medskip
\noindent
\textbf{Acknowledgments:} We would like to thank anonymous referees for their helpful comments. 
The first author and the second author acknowledges  SERB-DST 
for supporting this research via grants PDF/2021/003452 and SRG/2020/001162 respectively for funding to support this research. 

\bibliography{BibFile}

\begin{thebibliography}{10}

\bibitem{albertson2004coloring}
Michael Albertson, Glenn Chappell, Henry Kierstead, André Kündgen, and
  Radhika Ramamurthi.
\newblock Coloring with no 2-colored ${P_4}$'s.
\newblock {\em Electr. J. Comb.}, 11, 03 2004.

\bibitem{buskens2001sensitivity}
Christof B{\"u}skens and Helmut Maurer.
\newblock Sensitivity analysis and real-time optimization of parametric
  nonlinear programming problems.
\newblock In {\em Online Optimization of Large Scale Systems}, pages 3--16.
  Springer, 2001.

\bibitem{coleman1983estimation}
Thomas~F Coleman and Jorge~J Mor{\'e}.
\newblock Estimation of sparse jacobian matrices and graph coloring blems.
\newblock {\em SIAM journal on Numerical Analysis}, 20(1):187--209, 1983.

\bibitem{coleman1984estimation}
Thomas~F Coleman and Jorge~J Mor{\'e}.
\newblock Estimation of sparse hessian matrices and graph coloring problems.
\newblock {\em Mathematical programming}, 28(3):243--270, 1984.

\bibitem{courcelle1992monadic}
Bruno Courcelle.
\newblock The monadic second-order logic of graphs {III}: Tree-decompositions,
  minors and complexity issues.
\newblock {\em RAIRO-Theoretical Informatics and Applications}, 26(3):257--286,
  1992.

\bibitem{courcelle2000linear}
Bruno Courcelle, Johann~A Makowsky, and Udi Rotics.
\newblock Linear time solvable optimization problems on graphs of bounded
  clique-width.
\newblock {\em Theory of Computing Systems}, 33(2):125--150, 2000.

\bibitem{courcellecw}
Bruno Courcelle and Stephan Olariu.
\newblock Upper bounds to the clique width of graphs.
\newblock {\em Discret. Appl. Math.}, 101(1-3):77--114, 2000.

\bibitem{cygan2015parameterized}
Marek Cygan, Fedor~V Fomin, {\L}ukasz Kowalik, Daniel Lokshtanov, D{\'a}niel
  Marx, Marcin Pilipczuk, Micha{\l} Pilipczuk, and Saket Saurabh.
\newblock {\em Parameterized algorithms}, volume~4.
\newblock Springer, 2015.

\bibitem{fomin2010intractability}
Fedor~V Fomin, Petr~A Golovach, Daniel Lokshtanov, and Saket Saurabh.
\newblock Intractability of clique-width parameterizations.
\newblock {\em SIAM Journal on Computing}, 39(5):1941--1956, 2010.

\bibitem{frankilp}
A.~Frank and \'{E}va Tardos.
\newblock An application of simultaneous diophantine approximation in
  combinatorial optimization.
\newblock {\em Combinatorica}, 7(1):49–65, jan 1987.

\bibitem{ndrobert}
Robert Ganian.
\newblock Using neighborhood diversity to solve hard problems.
\newblock {\em CoRR}, abs/1201.3091, 2012.

\bibitem{ganian2015improving}
Robert Ganian.
\newblock Improving vertex cover as a graph parameter.
\newblock {\em Discrete Mathematics \& Theoretical Computer Science},
  17(2):77--100, 2015.

\bibitem{gebremedhin2009efficient}
Assefaw~H Gebremedhin, Arijit Tarafdar, Alex Pothen, and Andrea Walther.
\newblock Efficient computation of sparse hessians using coloring and automatic
  differentiation.
\newblock {\em INFORMS Journal on Computing}, 21(2):209--223, 2009.

\bibitem{grunbaum1973acyclic}
Branko Gr{\"u}nbaum.
\newblock Acyclic colorings of planar graphs.
\newblock {\em Israel journal of mathematics}, 14(4):390--408, 1973.

\bibitem{gutin2021parameterized}
Gregory Gutin, Diptapriyo Majumdar, Sebastian Ordyniak, and Magnus
  Wahlstr{\"o}m.
\newblock Parameterized pre-coloring extension and list coloring problems.
\newblock {\em SIAM Journal on Discrete Mathematics}, 35(1):575--596, 2021.

\bibitem{harshita2017fo}
Kona Harshita, Sounaka Mishra, P.~Ranjith, and N~Sadagopan.
\newblock {FO} and {MSO} approach to some graph problems: approximation and
  poly time results.
\newblock {\em arXiv preprint arXiv:1711.02889}, 2017.

\bibitem{hlinveny2008width}
Petr Hlin{\v{e}}n{\`y}, Sang-il Oum, Detlef Seese, and Georg Gottlob.
\newblock Width parameters beyond tree-width and their applications.
\newblock {\em The computer journal}, 51(3):326--362, 2008.

\bibitem{kannanilp}
Ravi Kannan.
\newblock Minkowski's convex body theorem and integer programming.
\newblock {\em Mathematics of Operations Research}, 12(3):415--440, 1987.

\bibitem{lampis2012algorithmic}
Michael Lampis.
\newblock Algorithmic meta-theorems for restrictions of treewidth.
\newblock {\em Algorithmica}, 64(1):19--37, 2012.

\bibitem{lei2018star}
Hui Lei, Yongtang Shi, and Zi-Xia Song.
\newblock Star chromatic index of subcubic multigraphs.
\newblock {\em Journal of Graph Theory}, 88(4):566--576, 2018.

\bibitem{lenstrailp}
H.~W. Lenstra.
\newblock Integer programming with a fixed number of variables.
\newblock {\em Mathematics of Operations Research}, 8(4):538--548, 1983.

\bibitem{lorenz1997multi}
Cristian Lorenz, I-C Carlsen, Thorsten~M Buzug, Carola Fassnacht, and
  J{\"u}rgen Weese.
\newblock A multi-scale line filter with automatic scale selection based on the
  hessian matrix for medical image segmentation.
\newblock In {\em International Conference on Scale-Space Theories in Computer
  Vision}, pages 152--163. Springer, 1997.

\bibitem{lyons2011acyclic}
Andrew Lyons.
\newblock Acyclic and star colorings of cographs.
\newblock {\em Discrete applied mathematics}, 159(16):1842--1850, 2011.

\bibitem{omoomi2018polynomial}
Behnaz Omoomi, Elham Roshanbin, and Marzieh~Vahid Dastjerdi.
\newblock A polynomial time algorithm to find the star chromatic index of
  trees.
\newblock {\em Electron. J. Comb.}, 28(1):1, 2021.

\bibitem{DBLP:journals/jct/OumS06}
Sang{-}il Oum and Paul~D. Seymour.
\newblock Approximating clique-width and branch-width.
\newblock {\em J. Comb. Theory, Ser. {B}}, 96(4):514--528, 2006.

\bibitem{robertson1983graph}
Neil Robertson and Paul~D Seymour.
\newblock Graph minors. {I}. excluding a forest.
\newblock {\em Journal of Combinatorial Theory, Series B}, 35(1):39--61, 1983.

\bibitem{shalu2022complexity}
MA~Shalu and Cyriac Antony.
\newblock The complexity of star colouring in bounded degree graphs and regular
  graphs.
\newblock In {\em Conference on Algorithms and Discrete Applied Mathematics},
  pages 78--90. Springer, 2022.

\end{thebibliography}

\end{document}